\theoremstyle{plain}
\newtheorem{Th}{Theorem}[section]
\newtheorem{Cor}[Th]{Corollary}
\newtheorem{Lem}[Th]{Lemma}
\newtheorem{Prop}[Th]{Proposition}
\theoremstyle{definition}
\theoremstyle{remark}
\newtheorem*{Rem}{Remark}%[section]
\numberwithin{equation}{section}
\newcommand{\DD}{{\mathbb D}}
\newcommand{\ZZ}{{\mathbb Z}}
\newcommand{\bx}{\boldsymbol{x}}
\newcommand{\by}{\boldsymbol{y}}
\newcommand{\bphi}{\boldsymbol{\phi}}
\begin{document}

\title
{Non-commutative rational Yang--Baxter maps}

\author{Adam Doliwa}

\address{Institute of Mathematics, Polish Academy of Sciences, ul.~\'{S}niadeckich~8, 00-956 Warsaw, Poland}

\address{Faculty of Mathematics and Computer Science, University of Warmia and Mazury in Olsztyn,
ul.~S{\l}oneczna~54, 10-710~Olsztyn, Poland}
\email{doliwa@matman.uwm.edu.pl}
\urladdr{http://wmii.uwm.edu.pl/~doliwa/}

\date{}
\keywords{non-commutative integrable difference equations; functional Yang--Baxter equation; non-commutative rational maps; non-autonomous lattice Gel'fand-Dikii systems; multidimensional consistency}
\subjclass[2010]{37K10, 37K60, 16T25, 39A14, 14E07}

\begin{abstract}
Starting from multidimensional consistency of non-commutative lattice modified Gel'fand--Dikii systems we present the corresponding solutions of the functional (set-theoretic) Yang--Baxter equation, which are non-commutative versions of the maps arising from geometric crystals. Our approach works under additional condition of centrality of certain products of non-commuting variables. Then we apply such a restriction on the level of the Gel'fand--Dikii systems what allows to obtain non-autonomous (but with central non-autonomous factors) versions of the equations. In particular we recover known non-commutative version of Hirota's lattice sine-Gordon equation, and we present an integrable non-commutative and non-autonomous lattice modified Boussinesq equation. 
\end{abstract}
\maketitle

\section{Introduction}

Let $\mathcal{X}$ be any set, a map $R\colon \mathcal{X} \times \mathcal{X}$ satisfying in  
$\mathcal{X} \times \mathcal{X}  \times \mathcal{X}$ the relation
\begin{equation}
\label{eq:f-YB}
R_{12} \circ R_{13} \circ R_{23} = R_{23} \circ R_{13} \circ R_{12} , 
\end{equation}
where $R_{ij}$ acts as $R$ on the $i$-th and $j$-th factors and as identity on the third, 
is called \emph{Yang--Baxter map} \cite{Drinfeld,Veselov}. If additionally $R$ satisfies the relation
\begin{equation}
R_{21} \circ R = \mathrm{Id},
\end{equation}
where $R_{21} = \tau \circ R \circ \tau$ and $\tau$ is the transposition,
then it is called \emph{reversible Yang--Baxter map}.

In this paper we study properties of a non-commutative version of the maps arising from geometric crystals~\cite{KNY-A,Etingof,SurisVeselov}. In particular we will demonstrate the following result.
\begin{Th} \label{th:xy-YB}
Given two assemblies of (non-commuting in general) variables $\bx=(x_1, \dots, x_L)$ and $\by=(y_1, \dots , y_L)$, define polynomials
\begin{equation}
\mathcal{P}_k = \sum_{a=0}^{L-1} \left(  \prod_{i=0}^{a-1} y_{k+i} \prod_{i=a+1}^{L-1} x_{k+i} \right) , \qquad k=1,\dots , L,
\end{equation}
where subscripts in the formula are taken modulo $L$. If the products $\alpha  = x_1 x_2 \dots x_L$ and $\beta = y_1 y_2 \dots y_L$ are central then the map
\begin{equation} \label{eq:R}
R: (\bx,\by) \mapsto (\tilde{\bx}, \tilde{\by}), \qquad 
\tilde{x}_k = \mathcal{P}_{k} x_k \mathcal{P}_{k+1}^{-1}, \qquad \tilde{y}_k = \mathcal{P}_{k}^{-1} y_k \mathcal{P}_{k+1}, \qquad k=1, \dots L,
\end{equation}
is reversible Yang--Baxter map. 
\end{Th}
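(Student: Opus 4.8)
The plan is to realise $R$ as a matrix--refactorisation map --- which is the mechanism by which such maps descend from the multidimensional consistency of the non-commutative lattice modified Gel'fand--Dikii hierarchy, and which in the commutative case reproduces the birational combinatorial $R$-matrix of geometric crystals \cite{KNY-A,Etingof,SurisVeselov} --- and then to invoke the standard principle \cite{Veselov,SurisVeselov} that a refactorisation map with unique factorisation is automatically a reversible Yang--Baxter map. To an assembly $\bx=(x_1,\dots,x_L)$ and a (scalar, hence central) spectral parameter $\lambda$ one attaches the companion matrix of the $L$-th order modified Gel'fand--Dikii difference operator,
\[
\mathsf{M}(\bx;\lambda)=\sum_{k=1}^{L}x_k\,E_{k,k}+\sum_{k=2}^{L}E_{k,k-1}+\lambda\,E_{1,L},
\]
and the claim to be proved is that the refactorisation
\[
\mathsf{M}(\bx;\lambda)\,\mathsf{M}(\by;\lambda)=\mathsf{M}(\tilde{\by};\lambda)\,\mathsf{M}(\tilde{\bx};\lambda)\qquad(\text{identically in }\lambda)
\]
holds precisely when $(\tilde{\bx},\tilde{\by})=R(\bx,\by)$ as in \eqref{eq:R}.

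The heart of the matter is the verification of this identity. The formulas $\tilde{x}_k=\mathcal{P}_k x_k\mathcal{P}_{k+1}^{-1}$ and $\tilde{y}_k=\mathcal{P}_k^{-1}y_k\mathcal{P}_{k+1}$ say precisely that $\mathsf{M}(\tilde{\bx};\lambda)=N\,\mathsf{M}(\bx;\lambda)\,(N')^{-1}$ and $\mathsf{M}(\tilde{\by};\lambda)=N^{-1}\,\mathsf{M}(\by;\lambda)\,N'$, where $N=\mathrm{diag}(\mathcal{P}_1,\dots,\mathcal{P}_L)$ and $N'=\mathrm{diag}(\mathcal{P}_2,\dots,\mathcal{P}_L,\mathcal{P}_1)$ is its cyclic shift; such gauge transformations keep the subdiagonal $1$'s and, because $\lambda$ is central, the corner entry, so both sides of the refactorisation are again of companion type. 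Substituting, the identity becomes $N\,\mathsf{M}(\bx;\lambda)\,\mathsf{M}(\by;\lambda)\,N'=\mathsf{M}(\by;\lambda)\,(N'N)\,\mathsf{M}(\bx;\lambda)$, and matching coefficients of $\lambda^{0},\lambda^{1},\lambda^{2}$ reduces it to a short list of word identities in the $x_k,y_k$ (of the type $y_1\mathcal{P}_2\mathcal{P}_1 x_1=\mathcal{P}_1 x_1 y_1\mathcal{P}_2$ when $L=2$, together with its cyclic and $x\leftrightarrow y$ companions). Each of these one checks by expanding the $\mathcal{P}$'s, and it is here that the centrality of $\alpha=x_1\cdots x_L$ and $\beta=y_1\cdots y_L$ is used: it lets one commute the ``wrapped-around'' products produced by the cyclic index and so close the computation. (Throughout one works over a division ring, or localises, so that the $\mathcal{P}_k$ and the $x_k,y_k$ are invertible.)

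Granting the refactorisation description, the Yang--Baxter relation follows in the usual way. Put $\check{R}:=\tau\circ R$, so that $R=\tau\circ\check{R}$ and $\check{R}$ is the adjacent-pair refactorisation: $\mathsf{M}(\bx;\lambda)\mathsf{M}(\by;\lambda)=\mathsf{M}(\bu;\lambda)\mathsf{M}(\bv;\lambda)$ with $\check{R}(\bx,\by)=(\bu,\bv)$. Applied to $(\bx,\by,\boldsymbol{z})$, the two sides of the braid relation $\check{R}_{12}\check{R}_{23}\check{R}_{12}=\check{R}_{23}\check{R}_{12}\check{R}_{23}$ both rewrite $\mathsf{M}(\bx;\lambda)\mathsf{M}(\by;\lambda)\mathsf{M}(\boldsymbol{z};\lambda)$ as the reversed product of three companion matrices by three elementary moves, and they agree because the two-factor splitting with prescribed ``spectral moduli'' is unique --- the moduli being $\alpha$, $\beta$, $\gamma=z_1\cdots z_L$, which are well defined and $R$-invariant precisely thanks to centrality; since $R=\tau\circ\check{R}$, this braid relation translates into \eqref{eq:f-YB}. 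Reversibility, $R_{21}\circ R=\mathrm{Id}$, is the involutivity $\check{R}^{2}=\mathrm{Id}$, again two-factor uniqueness applied to $\mathsf{M}(\tilde{\by};\lambda)\mathsf{M}(\tilde{\bx};\lambda)=\mathsf{M}(\bx;\lambda)\mathsf{M}(\by;\lambda)$. One also verifies that the centrality hypothesis propagates: a one-line telescoping gives $\tilde{x}_1\cdots\tilde{x}_L=\mathcal{P}_1(x_1\cdots x_L)\mathcal{P}_1^{-1}=\alpha$ and $\tilde{y}_1\cdots\tilde{y}_L=\beta$, so every assembly occurring along either composite in \eqref{eq:f-YB} again has central products.

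I expect the main obstacle to be the verification in the second paragraph: with no determinants at hand one must work with quasideterminants / non-commutative Cramer expansions and track the order of factors meticulously, and one must isolate exactly which products have to be central --- a priori one might fear needing more than $\alpha$ and $\beta$, and part of the work is to show that these two both suffice and are the right spectral invariants. After that the Yang--Baxter and reversibility parts are short. A fully self-contained alternative bypasses the matrices and verifies \eqref{eq:f-YB} and $R_{21}\circ R=\mathrm{Id}$ directly from structural identities relating the $\mathcal{P}_k$ of $(\bx,\by)$ to those of the transformed variables, but the combinatorial bookkeeping is essentially the same.
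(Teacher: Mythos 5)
Your strategy is in essence the paper's own second route: Corollary~\ref{cor:braid} states exactly your refactorization problem $A(\bx)A(\by)=A(\bx^\prime)A(\by^\prime)$ for a companion-type Lax matrix with central spectral parameter, and the Yang--Baxter property and reversibility are then extracted through the Coxeter/Veselov mechanism, so there is no divergence of strategy (your diagonal-gauge reformulation with $N=\mathrm{diag}(\mathcal{P}_1,\dots,\mathcal{P}_L)$ is just a convenient way to organize the entrywise check). The difficulty is that the two load-bearing claims of your sketch are asserted rather than proved, and they are precisely the paper's key lemmas. Matching the entries of the refactorization for general $L$ requires (i) that $\mathcal{P}_k x_k$ commutes with $y_k\mathcal{P}_{k+1}$ (diagonal entries) and (ii) that $\mathcal{P}_k^{-1}y_k\mathcal{P}_{k+1}+\mathcal{P}_{k-1}x_{k-1}\mathcal{P}_k^{-1}=x_k+y_{k-1}$ (subdiagonal and corner entries); both hinge on the identity $\mathcal{P}_k x_k-y_k\mathcal{P}_{k+1}=\alpha-\beta$, central and independent of $k$, which in turn needs the recurrences \eqref{eq:rec-P} and the fact that under centrality the cyclically shifted products $x_{k+1}\cdots x_{k+L}$ and $y_{k+1}\cdots y_{k+L}$ do not depend on $k$ (Lemma~\ref{lem:XY-central}). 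Your $L=2$ identity $y_1\mathcal{P}_2\mathcal{P}_1x_1=\mathcal{P}_1x_1y_1\mathcal{P}_2$ is an instance of this, but the general statement --- ``the heart of the matter'' by your own admission --- is left undone in your write-up.

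The second gap concerns the uniqueness you invoke. Uniqueness of the two-factor splitting with prescribed (swapped, central) moduli is a theorem, not a formality: in the paper it follows from the invariance $\mathcal{P}_k^{\prime}=\mathcal{P}_k$ of the polynomials under the companion equations \eqref{eq:xjk} (Lemma~\ref{lem:P-P}, proved by a nontrivial induction) together with the recurrences, which force the solution \eqref{eq:xy-j-sol} once the normalization \eqref{eq:xy-norm} is imposed; this is the Proposition of Section~\ref{sec:comp-KP}. Moreover, your braid-relation argument as stated needs uniqueness of the splitting of a \emph{three}-factor product into companion matrices with prescribed moduli, which does not follow formally from the two-factor statement (one cannot simply cancel a factor in a matrix identity over a division ring); the paper supplies what is needed either through the path interpretation and the three-dimensional consistency of the non-commutative KP map, or by carrying out Veselov's argumentation in detail. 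So the skeleton of your proposal is correct and coincides with the paper's, but until you establish the identity $\mathcal{P}_k x_k-y_k\mathcal{P}_{k+1}=\alpha-\beta$, the invariance of the $\mathcal{P}_k$, and the uniqueness statements (two-factor, and what is needed for three factors), the Yang--Baxter and reversibility parts remain a plan rather than a proof.
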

It is easy to see that the products $\alpha=x_1 \dots x_L$ and $\beta=y_1 \dots y_L$ are conserved by the map $R$. This can be used to reduce the number of variables. For example, in the simplest case $L=2$ define $x=x_1$, $y=y_1$ to get a parameter dependent reversible Yang--Baxter map $R(\alpha,\beta):(x,y) \mapsto (\tilde{x}, \tilde{y})$
\begin{equation}
\tilde{x} = \left(\alpha x^{-1} + y \right) x \left(x+ \beta y^{-1} \right)^{-1}, \qquad 
\tilde{y} = \left(\alpha x^{-1} + y \right)^{-1} y \left(x+ \beta y^{-1} \right),
\end{equation}
which in the commutative case is equivalent to the $F_{I\!I\!I}$ map in the list given in~\cite{ABS-YB}.

In recent studies on discrete integrable systems the property of multidimensional consistency \cite{ABS,FWN-cons} is considered as the main concept of the theory. Roughly speaking, it is the possibility of extending the number of independent variables of a given nonlinear system by adding its copies in different directions without creating this way inconsistency or multivaluedness.
It is known~\cite{ABS-YB,PTV} how to relate three dimensional consistency of integrable discrete systems with Yang--Baxter maps. There is also well known connection between Yang--Baxter maps and the braid relations. 

Non-commutative versions of integrable maps or discrete systems \cite{FWN-Capel,Kupershmidt,BobSur-nc,Nimmo-NCKP,DF-K} are of growing interest in mathematical physics. They may be considered as a useful platform to more thorough understanding of integrable quantum or statistical mechanics lattice systems, where the quantum Yang--Baxter equation~\cite{Baxter,QISM} plays a role. 

In Section~\ref{sec:nc-KP-sym} we use three dimensional consistency of non-commutative  Kadomtsev--Petviashvilii (KP) map to construct corresponding Yang--Baxter maps following ideas of \cite{KNY-A,KNY-qKP} applied there in the commutative case. It turns out that we can construct the solutions under periodicity and centrality (of certain products of the variables) assumptions. Then in Section~\ref{sec:GD-centr} we consider implication of the centrality assumption on the level of the non-commutative modified lattice Gel'fand--Dikii equations. In the simplest case we recover non-autonomous version of non-commutative Hirota's sine-Gordon equation~\cite{BobSur-nc}. We present also an integrable non-commutative and non-autonomous lattice modified Boussinesq equation.

\begin{Rem}
Throughout the paper we will work with division rings of (non-commutative) \emph{rational functions} in a finite number of (non-commuting) variables. This approach is intuitively accessible, see however \cite{Cohn} for formal definitions.
\end{Rem}

\section{Non-commutative rational realization of the symmetric group} \label{sec:nc-KP-sym}
\subsection{KP maps} \label{sec:GD} 

Consider the linear problem of the non-commutative KP hierarchy \cite{KNY-qKP,Nimmo-NCKP,Dol-GD}
\begin{equation} \label{eq:lin-KP-phi}
\bphi_{k+1} (n) - \bphi_{k} (n +\boldsymbol{\varepsilon}_i) = \bphi_{k}(n) u_{i,k}(n) ,\qquad k\in\ZZ, 
\qquad n\in\ZZ^{N}, \qquad i=1,\dots ,N,
\end{equation}
here $\bphi_{k}\colon\ZZ^N\to \DD^M$, and $\DD$ is a division ring, and $\boldsymbol{\varepsilon}_i\in\ZZ^N$ has $1$ at $i$-th place and all other zeros. 
The potentials $u_{i,k}:\ZZ^{N}\to\DD$ satisfy then the compatibility conditions
\begin{equation} \label{eq:KP-u}
u_{j,k}u_{i,k(j)}  = u_{i,k} u_{j,k(i)}  , \qquad
u_{i,k(j)}  + u_{j,k+1}  = 
u_{j,k(i)} + u_{i,k+1}, \qquad 1 \leq i\neq j \leq N,
\end{equation}
where we write $u_{i,k(j)}(n)$ instead of $u_{i,k}(n +\boldsymbol{\varepsilon}_j)$, and we skip the argument $n$.
In consequence we obtain the transformation rule
\begin{equation} \label{eq:KP-u-solved}
u_{i,k(j)} = ( u_{i,k} - u_{j,k})^{-1} u_{i,k} ( u_{i,k+1} - u_{j,k+1}), \qquad i\neq j,
\end{equation} 
which can be written as a non-commutative discrete KP map
\begin{equation*}
(\boldsymbol{u}_i, \boldsymbol{u}_j) \mapsto 
(\boldsymbol{u}_{i(j)}, \boldsymbol{u}_{j(i)} ) ,
\qquad \boldsymbol{u}_i = (u_{i,k}), \qquad k\in\ZZ. 
\end{equation*}
\begin{Prop}{\cite{Dol-GD,Dol-Des-red}}
The non-commutative discrete KP map is three-dimensionally consistent, i.e. both ways to calculate 
$\boldsymbol{u}_{i(jl)}$ give the same result, see Figure~\ref{fig:3D-GD-u}.
\end{Prop}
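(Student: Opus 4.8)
The plan is to reduce the statement to an algebraic identity and to organize that identity through the linear problem \eqref{eq:lin-KP-phi}. The conceptual core is the observation that the resolved transformation \eqref{eq:KP-u-solved} is nothing but the compatibility system \eqref{eq:KP-u} solved for the shifted potentials, and that in particular the map output automatically satisfies \emph{both} relations of \eqref{eq:KP-u}. Indeed, if $u_{i,k(j)}$ is defined by \eqref{eq:KP-u-solved}, then
\begin{equation*}
u_{i,k(j)} - u_{i,k+1} = (u_{i,k}-u_{j,k})^{-1}\bigl(u_{j,k}\,u_{i,k+1} - u_{i,k}\,u_{j,k+1}\bigr),
\end{equation*}
whose right-hand side is invariant under $i\leftrightarrow j$ (the two sign changes, one from the inverse and one from the bracket, cancel), so $u_{i,k(j)}+u_{j,k+1}=u_{j,k(i)}+u_{i,k+1}$; and, after cancelling the common right factor $u_{i,k+1}-u_{j,k+1}$, the relation $u_{j,k}\,u_{i,k(j)}=u_{i,k}\,u_{j,k(i)}$ reduces to the non-commutative identity
\begin{equation*}
B(A-B)^{-1}A = A(A-B)^{-1}B, \qquad A=u_{i,k},\quad B=u_{j,k},
\end{equation*}
which holds since $B(A-B)^{-1}A - A(A-B)^{-1}B = (B-A)(A-B)^{-1}A + A(A-B)^{-1}(A-B) = -A+A = 0$. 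At no stage are two variables commuted, so this is valid over the division ring $\DD$, or more precisely in the skew field of rational functions, where the differences serving as denominators are invertible by genericity.

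Having identified the KP map with the resolved form of \eqref{eq:KP-u}, I would realize the three initial sequences $\boldsymbol u_i(0),\boldsymbol u_j(0),\boldsymbol u_l(0)$ by a common wave function on the unit cube spanned by $\boldsymbol\varepsilon_i,\boldsymbol\varepsilon_j,\boldsymbol\varepsilon_l$. Pick generic $\bphi_k(0)$, $k\in\ZZ$, and set $\bphi_k(\boldsymbol\varepsilon_m):=\bphi_{k+1}(0)-\bphi_k(0)\,u_{m,k}(0)$ for $m\in\{i,j,l\}$, so that \eqref{eq:lin-KP-phi} holds at the base vertex and $u_{m,k}(0)$ is read off from $\bphi$. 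At a level-two vertex $\boldsymbol\varepsilon_m+\boldsymbol\varepsilon_{m'}$ the two candidate values of $\bphi$ produced by \eqref{eq:lin-KP-phi} --- going via $\boldsymbol\varepsilon_m$ or via $\boldsymbol\varepsilon_{m'}$, using the map outputs as the potentials there --- differ, after one substitutes the level-one formulae, by a left $\bphi_{k+1}(0)$-multiple of the discrepancy in the additive relation of \eqref{eq:KP-u} at the base plus a left $\bphi_k(0)$-multiple of the discrepancy in the multiplicative relation; both discrepancies vanish by Step~1, so $\bphi$ is well defined at level two. It then remains to define $\bphi$ at the far vertex $\boldsymbol\varepsilon_i+\boldsymbol\varepsilon_j+\boldsymbol\varepsilon_l$ along the three available routes and to check that they agree; expanding each route down to the base, the discrepancies reduce once more to those of \eqref{eq:KP-u} at the base for the three pairs of directions, so they vanish by Step~1. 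With $\bphi$ now well defined on the whole cube, $\boldsymbol u_{i(jl)}=u_{i,k}(\boldsymbol\varepsilon_j+\boldsymbol\varepsilon_l)$ is read off from $\bphi$ at that vertex together with the far vertex, and is therefore independent of the order in which the map is applied, which is the assertion; the analogous statements for $\boldsymbol u_{j(li)}$ and $\boldsymbol u_{l(ij)}$ follow at the same time.

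The step I expect to be the main obstacle is the closure of the cube at level three: in the non-commutative setting it cannot be inferred from its commutative counterpart and has to be carried out with strict control of factor order and of the side on which each inverse sits. The linear-problem formulation is worth the trouble mainly because it reduces that closure to repeated use of the single identity of Step~1, rather than a blind computation in the finitely many non-commuting variables $u_{m,k'}$, $m\in\{i,j,l\}$, $k'\in\{k,k+1,k+2\}$, on which $\boldsymbol u_{i(jl)}$ actually depends; such a direct verification is, however, available as a fallback if the structured argument becomes unwieldy. Finally, one should keep in mind that the whole statement concerns the generic, rational map, which is why one works in the skew field of rational functions, so that all the differences $u_{m,k}-u_{m',k}$ and their images under the map are legitimately invertible.
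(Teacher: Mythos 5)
Your Step~1 is correct and useful: the map \eqref{eq:KP-u-solved} does satisfy both relations of \eqref{eq:KP-u}, and your identity $B(A-B)^{-1}A=A(A-B)^{-1}B$ is verified without commuting variables. The level-two part of your wave-function construction is also sound: the discrepancy between the two values of $\bphi_k$ at $\boldsymbol{\varepsilon}_m+\boldsymbol{\varepsilon}_{m'}$ is exactly $-\bphi_{k+1}(0)$ times the additive discrepancy plus $\bphi_k(0)$ times the multiplicative discrepancy of \eqref{eq:KP-u}, so $\bphi$ is well defined at level two. Note that the paper itself gives no proof here (it cites \cite{Dol-GD,Dol-Des-red} and remarks that the statement follows from four-dimensional consistency of Desargues maps), so your proposal must stand on its own.

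It does not, because the level-three closure is where the actual content of the Proposition sits, and your argument there is circular. To propagate $\bphi$ from a level-two vertex, say $\boldsymbol{\varepsilon}_j+\boldsymbol{\varepsilon}_l$, to the far vertex you must choose a potential on that edge, and there are two candidates, $\boldsymbol{u}_{i(jl)}$ and $\boldsymbol{u}_{i(lj)}$ --- their equality is precisely what you are trying to prove. What your Step~1 gives you is consistency of $\bphi$ around each of the three faces not containing the origin: writing $\Phi[\,e,u\,]$ for the far-vertex value obtained through edge $e$ with potential $u$, these face arguments yield
\begin{equation*}
\Phi[i\text{-edge},u_{i(jl)}]=\Phi[l\text{-edge},u_{l(ji)}],\qquad
\Phi[i\text{-edge},u_{i(lj)}]=\Phi[j\text{-edge},u_{j(li)}],\qquad
\Phi[j\text{-edge},u_{j(il)}]=\Phi[l\text{-edge},u_{l(ij)}],
\end{equation*}
three equalities pairing the six candidate values into disjoint pairs. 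No chain among them forces $\Phi[i\text{-edge},u_{i(jl)}]=\Phi[i\text{-edge},u_{i(lj)}]$, hence no cancellation of $\bphi_k(\boldsymbol{\varepsilon}_j+\boldsymbol{\varepsilon}_l)$ ever produces $u_{i(jl)}=u_{i(lj)}$. The relations \eqref{eq:KP-u} at level-two vertices cannot help either, since inside the $3$-cube only one outgoing direction remains there. So the claim that ``the discrepancies reduce once more to those of \eqref{eq:KP-u} at the base'' is unsubstantiated; the closure must be supplied by an independent argument: either the direct non-commutative computation that both orderings give the same $u_{i,k(jl)}$ in terms of the base data (the ``fallback'' you mention but do not carry out), or extra structure such as the potentials $\rho_k$ with $u_{i,k}=\rho_k^{-1}\rho_{k(i)}$, or the four-dimensionally consistent Desargues maps used in \cite{Dol-GD,Dol-Des-red}. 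As it stands, the proposal proves the two-dimensional well-posedness of the map but not its three-dimensional consistency.
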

\begin{Rem}
Three dimensional consistency of the non-commutative discrete KP map is a consequence \cite{Dol-GD} of the four dimensional consistency of the so called Desargues maps~\cite{Dol-Des}.
\end{Rem}
\begin{figure}
\begin{center}
\includegraphics[width=12cm]{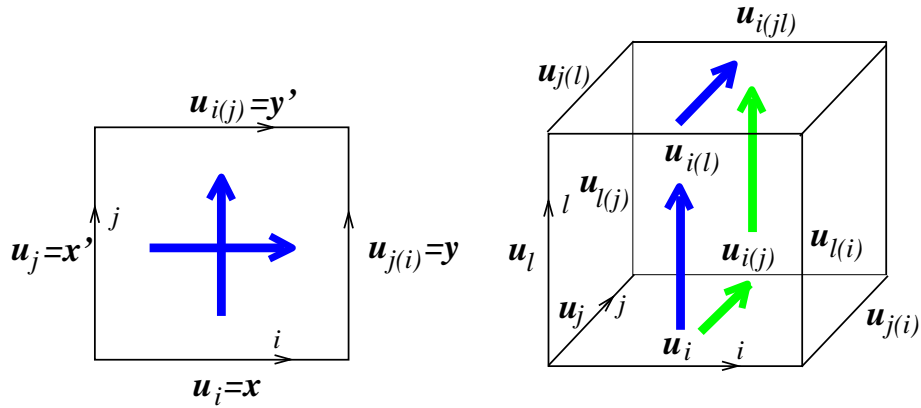}
\end{center}
\caption{The discrete KP map and its three dimensional consistency} 
\label{fig:3D-GD-u}
\end{figure}

To make connection with the Yang--Baxter maps consider $N$-cube graph, whose vertices are identified with binary sequences of length $N$ with two vertices connected by an edge if their sequences differ at one place only. The shortest paths from the initial vertex $(0,0,\dots ,0)$ to the terminal one $(1,1,\dots ,1)$ can be identified with permutations: a permutation $\sigma \in \mathcal{S}_N$ corresponds to the path with subsequent steps in directions 
$(\boldsymbol{\varepsilon}_{\sigma(1)}, \boldsymbol{\varepsilon}_{\sigma(2)}, \dots , \boldsymbol{\varepsilon}_{\sigma(N)})$. The symmetric group acts then on paths by the left natural action $\rho . \pi_\sigma = \pi_{\rho \sigma}$. 
Given initial weights $\boldsymbol{u}_i$, $i=1,\dots , N$, on edges connecting the initial vertex $(0,0,\dots ,0)$ with the vertex $\boldsymbol{\varepsilon}_i$, by the KP map we attach a weight to each edge of the cube graph. 
Each such path gives then a sequence of weights $\boldsymbol{w}^\sigma$, for example $\boldsymbol{w}^{\mathrm{Id}} = (\boldsymbol{u}_1, \boldsymbol{u}_{2(1)}, \dots , \boldsymbol{u}_{N(1,2,\dots ,N-1)})$. We are interested in maps $r_\sigma$ from the reference weights $\boldsymbol{w}^{\mathrm{Id}}$ to weights $\boldsymbol{w}^{\sigma}$. In particular, we study maps $r_i$, $i=1,\dots ,N-1$, which correspond to transpositions $\sigma_i=(i,i+1)$ generating the symmetric group $\mathcal{S}_N$ and satisfying the Coxeter relations~\cite{Humphreys}
\begin{align*}
& \sigma_j^2 = \mathrm{Id}, && \text{involutivity} \\
& \sigma_j \sigma_{j+1} \sigma_j = \sigma_{j+1} \sigma_j \sigma_{j+1}, && \text{braid relations}\\
& \sigma_i \sigma_j = \sigma_j \sigma_i \quad \text{for} \quad |i - j| > 1, && \text{commutativity}.
\end{align*}
In order to find such maps $r_i$ we have to find the so called first companion map
\begin{equation*}
(\boldsymbol{u}_i, \boldsymbol{u}_{j(i)}) \mapsto 
(\boldsymbol{u}_{j}, \boldsymbol{u}_{i(j)} ) ,
\end{equation*}
where we use variation of the terminology of~\cite{ABS-YB,PTV} where Yang--Baxter maps were studied in relation to multidimensionally consistent edge-field maps,

\subsection{The first companion map and the centrality assumption} \label{sec:comp-KP}
We will concentrate on deriving the first companion map, which we temporarily denote by $ r: (\boldsymbol{x}, \boldsymbol{y}) \mapsto (\boldsymbol{x}^\prime, \boldsymbol{y}^\prime)$, where by \eqref{eq:KP-u}
\begin{equation}
\label{eq:xjk}
x_{k}^\prime y_{k}^\prime = x_{k} y_{k}  , \qquad 
y_{k}^\prime + x_{k+1}^\prime  = y_{k} + x_{k+1}.
\end{equation}

For $\ell\in\ZZ_+$ define polynomials
\begin{equation*} \begin{split}
\mathcal{P}_{k}^{(\ell)} = \sum_{a=0}^\ell \left( \prod_{i=0}^{a-1} y_{k+i} \prod_{i=a+1}^\ell x_{k+i} \right) = & \\
x_{k+1} x_{k+2} \dots x_{k+\ell} + y_k x_{k+2} \dots x_{k+\ell-1} x_{k+\ell} + \dots + 
y_k y_{k+1} \dots & y_{k+\ell - 2} x_{k+\ell} + 
y_k \dots y_{k+\ell - 2} y_{k+\ell-1}, 
\end{split}
\end{equation*}
which satisfy the recurrence relations
\begin{equation} \label{eq:rec-P} 
\mathcal{P}_{k}^{(\ell)} = \mathcal{P}_{k}^{(\ell - 1)} \, x_{k+\ell} + \prod_{i=0}^{\ell -1} y_{k+i}  = 
\prod_{i=1}^\ell x_{k+i} + y_{k} \mathcal{P}_{k+1}^{(\ell -1)} ,
\end{equation}
where by definition $\mathcal{P}_{k}^{(0)} = 1$. By $\mathcal{P}_{k}^{(\ell)^\prime}$ denote analogous polynomials for primed variables.
\begin{Lem} \label{lem:P-P}
Assume that $x_{k}^\prime$, $y_k^\prime$ satisfy equations \eqref{eq:xjk} 
then $\mathcal{P}_{k}^{(\ell)\prime} = \mathcal{P}_{k}^{(\ell)}$.
\end{Lem}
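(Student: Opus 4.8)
The plan is to prove $\mathcal{P}_{k}^{(\ell)\prime} = \mathcal{P}_{k}^{(\ell)}$ by induction on $\ell$, simultaneously for all $k$, using the two forms of the recurrence \eqref{eq:rec-P} together with the two relations \eqref{eq:xjk}, i.e. the multiplicative one $x_{k}^\prime y_{k}^\prime = x_{k} y_{k}$ and the additive one $y_{k}^\prime + x_{k+1}^\prime = y_{k} + x_{k+1}$. The cases $\ell = 0$ (both sides equal $1$) and $\ell = 1$ (where $\mathcal{P}_{k}^{(1)} = x_{k+1} + y_{k}$ is unchanged by the additive relation, since ring addition is commutative) form the base. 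Note that no centrality hypothesis enters: the identity is a purely formal consequence of \eqref{eq:xjk} and \eqref{eq:rec-P} valid in the division ring of rational functions, so the induction proceeds verbatim.

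For the inductive step $\ell \ge 2$ I would examine the difference $\Delta_{k}^{(\ell)} := \mathcal{P}_{k}^{(\ell)\prime} - \mathcal{P}_{k}^{(\ell)}$ and peel one factor off the right by the first form of \eqref{eq:rec-P}, writing $\prod_{i=0}^{\ell-1} y_{k+i} = \big( \prod_{i=0}^{\ell-2} y_{k+i} \big) y_{k+\ell-1}$ and similarly for the primed quantities. Inserting the inductive hypothesis $\mathcal{P}_{k}^{(\ell-1)\prime} = \mathcal{P}_{k}^{(\ell-1)}$ and replacing $x_{k+\ell}^\prime - x_{k+\ell}$ by $y_{k+\ell-1} - y_{k+\ell-1}^\prime$ (additive relation), one regroups the terms into $\Delta_{k}^{(\ell)} = \big( \mathcal{P}_{k}^{(\ell-1)} - \prod_{i=0}^{\ell-2} y_{k+i} \big) y_{k+\ell-1} - \big( \mathcal{P}_{k}^{(\ell-1)} - \prod_{i=0}^{\ell-2} y_{k+i}^\prime \big) y_{k+\ell-1}^\prime$. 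Applying the first form of \eqref{eq:rec-P} at level $\ell-1$ identifies the first bracket with $\mathcal{P}_{k}^{(\ell-2)} x_{k+\ell-1}$; for the second bracket one first rewrites $\mathcal{P}_{k}^{(\ell-1)} = \mathcal{P}_{k}^{(\ell-1)\prime}$, applies the primed recurrence, and then uses the inductive hypothesis once more, $\mathcal{P}_{k}^{(\ell-2)\prime} = \mathcal{P}_{k}^{(\ell-2)}$, to reach $\mathcal{P}_{k}^{(\ell-2)} x_{k+\ell-1}^\prime$. Factoring $\mathcal{P}_{k}^{(\ell-2)}$ out on the left leaves $\Delta_{k}^{(\ell)} = \mathcal{P}_{k}^{(\ell-2)} \big( x_{k+\ell-1} y_{k+\ell-1} - x_{k+\ell-1}^\prime y_{k+\ell-1}^\prime \big)$, which is zero by the multiplicative relation in \eqref{eq:xjk}.

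The step that needs the most care is the non-commutative bookkeeping: one has to keep track of the side on which every variable and every $\mathcal{P}$-polynomial sits, and the final cancellation works precisely because the surviving factor $x_{k+\ell-1}$ stands to the left of $y_{k+\ell-1}$, so that it combines into $x_{k+\ell-1} y_{k+\ell-1}$ --- the combination preserved by \eqref{eq:xjk} --- rather than into $y_{k+\ell-1} x_{k+\ell-1}$, which is not preserved. A minor point is that the argument invokes the inductive hypothesis at both levels $\ell-1$ and $\ell-2$, so it should be set up as a complete induction, which is harmless given the two base cases. The whole computation can equally well be run "from the other end", peeling a factor off the left with the second form of \eqref{eq:rec-P}; then one arrives at $\Delta_{k}^{(\ell)} = \big( x_{k+1} y_{k+1} - x_{k+1}^\prime y_{k+1}^\prime \big) \mathcal{P}_{k+2}^{(\ell-2)} = 0$ instead.
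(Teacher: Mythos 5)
Your proof is correct, and it is organized differently from the paper's. You run a direct (strong) induction on $\ell$ using the two forms of the recurrence \eqref{eq:rec-P}: peel off the last factor, trade $x_{k+\ell}^\prime-x_{k+\ell}$ for $y_{k+\ell-1}-y_{k+\ell-1}^\prime$ via the additive relation, use the recurrence at level $\ell-1$ (plus the hypothesis at levels $\ell-1$ and $\ell-2$) to pull out $\mathcal{P}_{k}^{(\ell-2)}$ on the left, and finish with the multiplicative relation applied to the correctly ordered product $x_{k+\ell-1}y_{k+\ell-1}$; the computation checks out, including the mirror version peeling from the left. The paper instead observes that the whole product $\prod_{i=0}^{\ell-1}\bigl(y_{k+i}+x_{k+i+1}\bigr)$ is manifestly unchanged under priming, expands it, identifies $\mathcal{P}_{k}^{(\ell)}$ as the part free of adjacent factors $x_{k+p}y_{k+p}$, and cancels the remaining ``mixed'' terms, grouped by the smallest such $p$ into blocks $\mathcal{P}_{k}^{(p-1)}\,x_{k+p}\,y_{k+p}\prod_{i=p+1}^{\ell-1}\bigl(y_{k+i}+x_{k+i+1}\bigr)$, each invariant by induction and by \eqref{eq:xjk}. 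The two arguments use exactly the same ingredients; yours is a more mechanical recurrence computation with lighter combinatorics (at the cost of two base cases and a two-level induction, which you correctly flag), while the paper's makes the invariance more transparent by exhibiting $\mathcal{P}_{k}^{(\ell)}$ as the ``unmixed'' part of an obviously invariant product. You are also right that neither periodicity nor centrality is needed here --- the lemma is a formal consequence of \eqref{eq:xjk} alone, exactly as in the paper.
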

\begin{proof} 
For $\ell = 1$ we have just the second of equations \eqref{eq:xjk}. For $\ell \geq 1$ notice that by \eqref{eq:xjk}
the product  
%\begin{equation*}
$\prod_{i=0}^{\ell-1} (y_{k+i} + x_{k+i+1})$ % = 
%(y_{k} + x_{k+1}) (y_{k+1} + x_{k+2}) \dots (y_{k+\ell - 1} + x_{k+\ell})
%\end{equation*}
is equal to its primed version. 
It splits into the sum of $\mathcal{P}_{k}^{(\ell)}$ and the part with summands containing the factors 
$\dots x_{k+p} y_{k+p} \dots$ with possible $p=1,\dots ,\ell-1$. We group such unwanted terms into (disjoned) parts depending on the smallest $p$. Such a part has the structure 
\begin{equation*} 
\mathcal{P}_{k}^{(p-1)} \, x_{k+p} \, y_{k+p} \prod_{i=p+1}^{\ell-1} (y_{k+i} + x_{k+i+1}) , 
\end{equation*}
which due to the induction assumption and equations \eqref{eq:xjk} is equal to its primed version, therefore both cancel out.
\end{proof}

From now on we assume $L$-periodicity condition: $x_{k+L} = x_{k}$, $y_{k+L} = y_{k}$. 
Define $\mathcal{P}_{k} = \mathcal{P}_{k}^{(L-1)}$, then Lemma~\ref{lem:P-P} and recurrence relations 
\eqref{eq:rec-P} imply
\begin{equation*}
\mathcal{P}_{k} x_{k} + \prod_{i=0}^{L-1} y_{k+i}  = 
\prod_{i=1}^{L} x_{k+i}^\prime  + y_{k}^\prime \mathcal{P}_{k+1},\qquad
\prod_{i=1}^{L} x_{k+i} + y_{k} \mathcal{P}_{k+1} = 
\mathcal{P}_{k} x_{k}^\prime + \prod_{i=0}^{L-1} y_{k+i}^\prime  .
\end{equation*}
Notice that if we would impose the additional normalization condition 
\begin{equation} \label{eq:xy-norm}
\prod_{i=1}^{L} x_{k+i}^\prime    = \prod_{i=0}^{L-1} y_{k+i}  , \qquad 
\prod_{i=0}^{L-1} y_{k+i}^\prime = \prod_{i=1}^{L} x_{k+i},
\end{equation}
then equations \eqref{eq:xjk} could be solved as
\begin{equation} \label{eq:xy-j-sol}
x_{k}^\prime = \mathcal{P}_{k}^{-1} y_{k} \mathcal{P}_{k+1}, \qquad 
y_{k}^\prime = \mathcal{P}_{k} x_{k} \mathcal{P}_{k+1}^{-1}. 
\end{equation}

However, equations \eqref{eq:xy-j-sol} and condition \eqref{eq:xy-norm} are not compatible for general non-commuting variables. The above procedure of getting solutions works if we make additional centrality assumptions which state that 
$\alpha = \prod_{i=1}^{L} x_{i}$ and $\beta = \prod_{i=1}^{L} y_{i}$ commute with other elements of the division ring. 
\begin{Lem} \label{lem:XY-central}
Under the centrality assumptions the products $\prod_{i=1}^{L} x_{k+i}$ and $ \prod_{i=1}^{L} y_{k+i}$ do not depend on the index $k$. Moreover
\begin{equation} 
\mathcal{P}_{k} x_{k} - y_{k} \mathcal{P}_{k+1} = \alpha -  \beta,	
\end{equation}
which means that the above expression is central and independent of index $k$ as well. In particular $\mathcal{P}_{k} x_{k} $ commutes with $y_{k} \mathcal{P}_{k+1}$.
\end{Lem}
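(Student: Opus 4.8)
The plan is to prove the three assertions in order, the first being the main tool for the rest. I would begin with the elementary observation that if a product $AB$ of invertible elements of the division ring is central, then $BA = AB$: since $AB$ commutes with $A^{-1}$, one has $ABA^{-1} = A^{-1}(AB) = B$. Fixing $k$ and splitting $\alpha = \left(x_1 \cdots x_k\right)\left(x_{k+1}\cdots x_L\right) =: AB$, centrality of $\alpha$ gives $BA = \alpha$, and by $L$-periodicity $BA = x_{k+1}\cdots x_L\, x_{L+1}\cdots x_{L+k} = \prod_{i=1}^L x_{k+i}$; hence this product equals $\alpha$ for every $k$ (the case $L \mid k$ being trivial), and the same computation with $\beta$ shows $\prod_{i=1}^L y_{k+i} = \beta$ for all $k$.

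For the identity $\mathcal{P}_k x_k - y_k \mathcal{P}_{k+1} = \alpha - \beta$ I would use the two forms of the recurrence \eqref{eq:rec-P} at level $\ell = L-1$. The second form gives $\mathcal{P}_k = \prod_{i=1}^{L-1} x_{k+i} + y_k \mathcal{P}_{k+1}^{(L-2)}$, so, using $x_k = x_{k+L}$ and the previous paragraph,
\[ \mathcal{P}_k x_k = \Big( \prod_{i=1}^{L-1} x_{k+i}\Big) x_k + y_k \mathcal{P}_{k+1}^{(L-2)} x_k = \alpha + y_k \mathcal{P}_{k+1}^{(L-2)} x_k. \]
The first form of \eqref{eq:rec-P} written at index $k+1$ gives $\mathcal{P}_{k+1} = \mathcal{P}_{k+1}^{(L-2)} x_{k+L} + \prod_{i=1}^{L-1} y_{k+i}$, whence, again by the previous paragraph,
\[ y_k \mathcal{P}_{k+1} = y_k \mathcal{P}_{k+1}^{(L-2)} x_k + y_k \prod_{i=1}^{L-1} y_{k+i} = y_k \mathcal{P}_{k+1}^{(L-2)} x_k + \beta. \]
Subtracting the two displays, the common term $y_k \mathcal{P}_{k+1}^{(L-2)} x_k$ cancels and leaves $\mathcal{P}_k x_k - y_k \mathcal{P}_{k+1} = \alpha - \beta$, which is central and independent of $k$ because $\alpha$ and $\beta$ are.

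Finally, setting $U = \mathcal{P}_k x_k$, $V = y_k \mathcal{P}_{k+1}$ and $c = \alpha - \beta$, the identity reads $U = V + c$ with $c$ central, so $UV = V^2 + cV = VU$, which is the last claim. The only delicate point is the index bookkeeping modulo $L$ in the middle step: one must verify that the boundary terms produced by the recurrences are genuine full $L$-fold products, so that the first paragraph applies, and that the two surviving middle terms are literally identical. There is no analytic difficulty beyond that.
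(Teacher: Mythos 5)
Your proof is correct and follows essentially the same route as the paper: the first claim via centrality of $\alpha$, $\beta$ together with periodicity (the paper phrases it as conjugation, $\prod_{i=1}^{L}x_{k+i}=(x_1\dots x_{k-1})^{-1}\alpha\,(x_1\dots x_{k-1})$, which is the same observation as your $AB=BA$ argument), and the identity $\mathcal{P}_{k}x_{k}-y_{k}\mathcal{P}_{k+1}=\alpha-\beta$ from the two forms of the recurrence \eqref{eq:rec-P}, which the paper merely cites and you work out explicitly. Your index bookkeeping and the final commutation step are all in order.
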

\begin{proof}
The first part follows from identities
\begin{equation*}
\prod_{i=1}^{L} x_{k+i} = (x_1 \dots x_{k-1})^{-1} \, \alpha \, (x_1 \dots x_{k-1}) , \qquad 
\prod_{i=1}^{L} y_{k+i} = (y_1 \dots y_{k-1})^{-1} \, \beta \, (y_1 \dots y_{k-1}) ,
\end{equation*}
where we used also the periodicity assumption. The second part is implied by 
equations \eqref{eq:rec-P}.
\end{proof}
\begin{Prop}
Under the centrality assumption the expressions for
$x_{k}^\prime $ and $y_{k}^\prime$ given by \eqref{eq:xy-j-sol} provide the unique solution of equations \eqref{eq:xjk} supplemented by the normalization conditions $\alpha^\prime  = \beta$ and $\beta^\prime  = \alpha$.
\end{Prop}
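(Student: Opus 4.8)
The plan is to establish existence and uniqueness in turn, extracting everything from Lemmas~\ref{lem:P-P} and~\ref{lem:XY-central}; throughout one works with $L$-periodic sequences in the division ring of rational functions, where each $\mathcal{P}_k$, being a nonzero polynomial, is invertible so that \eqref{eq:xy-j-sol} makes sense.

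For existence I would take $x_k^\prime$, $y_k^\prime$ to be \emph{defined} by~\eqref{eq:xy-j-sol} and verify the four requirements. Periodicity $x_{k+L}^\prime=x_k^\prime$, $y_{k+L}^\prime=y_k^\prime$ is immediate from $L$-periodicity of $\mathcal{P}_k$. The normalization follows by telescoping: $\prod_{k=1}^{L}(\mathcal{P}_k x_k \mathcal{P}_{k+1}^{-1})=\mathcal{P}_1\alpha\mathcal{P}_1^{-1}=\alpha$, so $\beta^\prime=\alpha$, and likewise $\prod_{k=1}^{L}(\mathcal{P}_k^{-1} y_k \mathcal{P}_{k+1})=\mathcal{P}_1^{-1}\beta\mathcal{P}_1=\beta$, so $\alpha^\prime=\beta$, using $\mathcal{P}_{L+1}=\mathcal{P}_1$ and the centrality of $\alpha,\beta$. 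For the first of~\eqref{eq:xjk} I would set $A=\mathcal{P}_k x_k$ and $B=y_k\mathcal{P}_{k+1}$; then $x_k^\prime y_k^\prime=\mathcal{P}_k^{-1}(BA)\mathcal{P}_{k+1}^{-1}$, and since $AB=BA$ by Lemma~\ref{lem:XY-central} this equals $\mathcal{P}_k^{-1}(AB)\mathcal{P}_{k+1}^{-1}=\mathcal{P}_k^{-1}\mathcal{P}_k x_k y_k\mathcal{P}_{k+1}\mathcal{P}_{k+1}^{-1}=x_k y_k$. For the second of~\eqref{eq:xjk}, Lemma~\ref{lem:XY-central} gives $\mathcal{P}_k x_k=y_k\mathcal{P}_{k+1}+(\alpha-\beta)$ and, after the shift $k\mapsto k+1$, $y_{k+1}\mathcal{P}_{k+2}=\mathcal{P}_{k+1}x_{k+1}-(\alpha-\beta)$; inserting these into~\eqref{eq:xy-j-sol} and using that $\alpha-\beta$ is central yields $y_k^\prime=y_k+(\alpha-\beta)\mathcal{P}_{k+1}^{-1}$ and $x_{k+1}^\prime=x_{k+1}-(\alpha-\beta)\mathcal{P}_{k+1}^{-1}$, whose correction terms cancel upon addition.

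For uniqueness I would start from an arbitrary $L$-periodic solution $(\bx^\prime,\by^\prime)$ of~\eqref{eq:xjk} satisfying $\alpha^\prime=\beta$ and $\beta^\prime=\alpha$. Applying Lemma~\ref{lem:P-P} (with $\ell=L-1$ and $\ell=L$) together with the recurrences~\eqref{eq:rec-P}, exactly as in the derivation in the text, produces the pair of identities displayed just before~\eqref{eq:xy-norm}. The decisive observation is that $\alpha^\prime$ and $\beta^\prime$, being equal to $\beta$ and $\alpha$, are central, so Lemma~\ref{lem:XY-central} applies to the primed variables as well; hence $\prod_{i=1}^{L} x_{k+i}^\prime=\alpha^\prime=\beta$ and $\prod_{i=0}^{L-1} y_{k+i}^\prime=\beta^\prime=\alpha$ for every $k$, while $\prod_{i=0}^{L-1} y_{k+i}=\beta$ and $\prod_{i=1}^{L} x_{k+i}=\alpha$. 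Substituting these into the two identities, the equal summands $\beta$ (respectively $\alpha$) cancel from both sides and one is left with $\mathcal{P}_k x_k=y_k^\prime\mathcal{P}_{k+1}$ and $y_k\mathcal{P}_{k+1}=\mathcal{P}_k x_k^\prime$, i.e.\ precisely~\eqref{eq:xy-j-sol}; this simultaneously shows that such a solution is unique and that it coincides with the one constructed above.

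The one genuinely delicate point — and the reason the claim is not vacuous — is that, as already remarked, the formulas~\eqref{eq:xy-j-sol} clash with the normalization~\eqref{eq:xy-norm} for unrestricted non-commuting variables. Everything hinges on the two consequences of centrality collected in Lemma~\ref{lem:XY-central}: that the cyclic products $\prod_i x_{k+i}$ and $\prod_i y_{k+i}$ collapse to the $k$-independent central scalars $\alpha$ and $\beta$ (which then cancel in the identities preceding~\eqref{eq:xy-norm}), and that $\mathcal{P}_k x_k$ commutes with $y_k\mathcal{P}_{k+1}$ (which rescues $x_k^\prime y_k^\prime=x_k y_k$). Once those two facts are granted, the remaining computations are purely mechanical.
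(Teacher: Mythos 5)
Your proposal is correct and takes essentially the same route as the paper: the existence check is exactly the printed proof (commutativity of $\mathcal{P}_k x_k$ with $y_k\mathcal{P}_{k+1}$ for the product equation, the central constant $\alpha-\beta$ for the sum equation, and telescoping for the normalization). Your uniqueness argument is the paper's derivation preceding \eqref{eq:xy-norm} made explicit, with the worthwhile additional observation that centrality of $\alpha'=\beta$ and $\beta'=\alpha$ allows Lemma~\ref{lem:XY-central} to be applied to the primed variables, so that the $k$-shifted products in \eqref{eq:xy-norm} indeed reduce to the stated normalization --- a point the paper leaves implicit.
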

\begin{proof} Notice that by Lemma~\ref{lem:XY-central} $x_{k}^\prime $ and $y_{k}^\prime$ given by \eqref{eq:xy-j-sol} satisfy the normalization condition.
Then also both expressions 
\begin{equation*}
y_{k}^\prime + x_{k+1}^\prime - y_{k} - x_{k+1} =
\left( \mathcal{P}_{k} x_{k} - y_{k} \mathcal{P}_{k+1} \right)\mathcal{P}_{k+1}^{-1} +
\mathcal{P}_{k+1}^{-1} \left( y_{k+1} \mathcal{P}_{k+2}  - \mathcal{P}_{k+1} x_{k+1} \right)
\end{equation*}
and
\begin{equation*}
x_{k}^\prime y_{k}^\prime  - x_{k} y_{k} =   \mathcal{P}_{k}^{-1} \left( y_{k} \mathcal{P}_{k+1} \, \mathcal{P}_{k} x_{k} -
\mathcal{P}_{k} x_{k} \, y_{k}  \mathcal{P}_{k+1} \right) \mathcal{P}_{k+1}^{-1} 
\end{equation*}
vanish due to Lemma~\ref{lem:XY-central}.
\end{proof}       
\begin{Cor} \label{cor:inv}
The first companion map $ (\boldsymbol{x}, \boldsymbol{y}) \mapsto (\boldsymbol{x}^\prime, \boldsymbol{y}^\prime)$ given above is involutory.
\end{Cor}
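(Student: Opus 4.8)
The plan is to compute the square of the companion map \eqref{eq:xy-j-sol} and check that it is the identity. Write $(\bx^{\prime\prime},\by^{\prime\prime})$ for the image of $(\bx^\prime,\by^\prime)$ under the same formulas, so that
\begin{equation*}
x_k^{\prime\prime} = (\mathcal{P}_k^\prime)^{-1} \, y_k^\prime \, \mathcal{P}_{k+1}^\prime, \qquad
y_k^{\prime\prime} = \mathcal{P}_k^\prime \, x_k^\prime \, (\mathcal{P}_{k+1}^\prime)^{-1},
\end{equation*}
where $\mathcal{P}_k^\prime = \mathcal{P}_k^{(L-1)\prime}$ is the polynomial \eqref{eq:xy-j-sol} builds from the primed variables. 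This second application is legitimate: by the Proposition preceding the corollary one has $\alpha^\prime=\beta$ and $\beta^\prime=\alpha$, so the primed products are again central, and the primed variables inherit $L$-periodicity because $\mathcal{P}_k$ does.

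The crucial observation is that $\mathcal{P}_k^\prime = \mathcal{P}_k$. Indeed, by the same Proposition the variables $x_k^\prime, y_k^\prime$ given by \eqref{eq:xy-j-sol} satisfy equations \eqref{eq:xjk}; hence Lemma~\ref{lem:P-P} applied with $\ell = L-1$ gives $\mathcal{P}_k^{(L-1)\prime} = \mathcal{P}_k^{(L-1)}$, i.e. $\mathcal{P}_k^\prime = \mathcal{P}_k$ for every $k$. Substituting this together with the explicit formulas \eqref{eq:xy-j-sol} yields the telescoping cancellations
\begin{equation*}
x_k^{\prime\prime} = \mathcal{P}_k^{-1} \left( \mathcal{P}_k \, x_k \, \mathcal{P}_{k+1}^{-1} \right) \mathcal{P}_{k+1} = x_k, \qquad
y_k^{\prime\prime} = \mathcal{P}_k \left( \mathcal{P}_k^{-1} \, y_k \, \mathcal{P}_{k+1} \right) \mathcal{P}_{k+1}^{-1} = y_k,
\end{equation*}
for all $k$, which is the assertion.

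I do not expect a genuine obstacle here: the whole content of the statement is the invariance $\mathcal{P}_k^\prime=\mathcal{P}_k$ of the polynomials under the companion map, and that has already been supplied by Lemma~\ref{lem:P-P}. The only point requiring a moment's care is that Lemma~\ref{lem:P-P} has \eqref{eq:xjk} as a hypothesis, so one must invoke the preceding Proposition (valid under the centrality assumption) to know that \eqref{eq:xy-j-sol} really does satisfy \eqref{eq:xjk}; this makes the argument non-circular, after which the computation above is a one-line verification.
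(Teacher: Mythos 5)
Your proof is correct, and it takes a slightly different route from the one the paper intends. The corollary is stated in the paper without an explicit argument, immediately after the Proposition, and the intended justification is the uniqueness statement there: equations \eqref{eq:xjk} are manifestly symmetric under the interchange $(\bx,\by)\leftrightarrow(\bx^\prime,\by^\prime)$, and the normalization $\alpha^\prime=\beta$, $\beta^\prime=\alpha$ is swapped accordingly, so $(\bx,\by)$ solves the companion problem posed for $(\bx^\prime,\by^\prime)$ and uniqueness forces the second application of the map to return $(\bx,\by)$. You instead verify involutivity by direct computation, the single substantive ingredient being the invariance $\mathcal{P}_k^\prime=\mathcal{P}_k$, which you correctly obtain from Lemma~\ref{lem:P-P} after invoking the Proposition (under centrality) to know that the primed variables really satisfy \eqref{eq:xjk}; you also rightly check that the primed data again meet the centrality and periodicity hypotheses, so the second application of \eqref{eq:xy-j-sol} is an application of the same map, and the telescoping cancellation is then immediate. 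The two arguments are of comparable length; the uniqueness route requires no further computation, while yours uses only the existence half of the Proposition (that \eqref{eq:xy-j-sol} solves \eqref{eq:xjk} with the swapped normalization) and makes the inverse explicit, which is a perfectly acceptable and, if anything, more self-contained alternative.
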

\begin{Cor} \label{cor:braid}
The problem of finding the first companion of the KP map in the periodic reduction can be considered as a refactorization problem 
$A(\bx) A(\by) = A(\bx^\prime) A(\by^\prime)$, where the matrix
\begin{equation}
A(\bx) = \left(  \begin{array}{ccccc}  
- x_{1} & 0  & \cdots & 0 & \lambda \\
1 & - x_{2}  & 0 & \hdots  & 0 \\
0 & 1 & \ddots &  &  \vdots \\
\vdots &  & & -x_{L-1}  & 0 \\
0 & 0 & \ \hdots  & 1 & - x_{L}  \end{array} \right) 
\end{equation}
with the central spectral parameter $\lambda$ is the  $L$-periodic reduction of the discrete non-commutative KP hierarchy (Gel'fand--Dikii system) linear problem \eqref{eq:lin-KP-phi} studied in \cite{Dol-GD}.
\end{Cor}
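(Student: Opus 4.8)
The plan is to read off the refactorization form from the Lax (linear-problem) representation of the $L$-periodic reduction, and then to verify by a short direct computation that the resulting matrix identity is equivalent to the defining relations \eqref{eq:xjk} of the first companion map.

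First I would set up the $L$-periodic reduction of \eqref{eq:lin-KP-phi}. Fixing a direction $i$, writing $x_k=u_{i,k}$ and $\bphi_k^\prime=\bphi_k(n+\boldsymbol{\varepsilon}_i)$, equation \eqref{eq:lin-KP-phi} becomes $\bphi_k^\prime=\bphi_{k+1}-\bphi_k x_k$. Imposing the $L$-periodicity $x_{k+L}=x_k$ together with the Bloch-type condition $\bphi_{k+L}=\bphi_k\lambda$ with $\lambda$ central, and collecting $\bPhi=(\bphi_1,\dots,\bphi_L)$, a check of the $L$ components shows that the shift in direction $i$ is realized as the right multiplication $\bPhi\mapsto\bPhi A(\bx)$ with exactly the matrix of the statement: the components $k=1,\dots,L-1$ reproduce the two-diagonal part, while the $k=L$ component produces the corner entry $\lambda$ out of the periodicity condition — and it is precisely centrality of $\lambda$ that lets it be transported past the components of $\bphi_1$ there. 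This is the linear problem of the Gel'fand--Dikii reduction of \cite{Dol-GD}.

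Next I would translate three-dimensional consistency into the matrix equation. The shift in direction $i$ followed by the shift in direction $j$ sends $\bPhi\mapsto\bPhi A(\boldsymbol{u}_i)A(\boldsymbol{u}_{j(i)})$, while the opposite order gives $\bPhi\mapsto\bPhi A(\boldsymbol{u}_j)A(\boldsymbol{u}_{i(j)})$; consistency of the linear problem forces $A(\boldsymbol{u}_i)A(\boldsymbol{u}_{j(i)})=A(\boldsymbol{u}_j)A(\boldsymbol{u}_{i(j)})$. Under the relabelling that defines the first companion map, $\bx=\boldsymbol{u}_i$, $\by=\boldsymbol{u}_{j(i)}$, $\bx^\prime=\boldsymbol{u}_j$, $\by^\prime=\boldsymbol{u}_{i(j)}$ — the same one that turns \eqref{eq:KP-u} into \eqref{eq:xjk} — this reads $A(\bx)A(\by)=A(\bx^\prime)A(\by^\prime)$.

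Finally, and this is the only step that needs a computation, I would multiply $A(\bx)A(\by)$ out entrywise to see that the matrix identity is equivalent to \eqref{eq:xjk}. The product is cyclically banded: diagonal entries $x_ky_k$, first-subdiagonal entries $-(x_k+y_{k-1})$, second-subdiagonal entries $1$, and the only band-breaking entries are $\lambda$ in positions $(1,L-1)$ and $(2,L)$ and $-\lambda(x_1+y_L)$ in position $(1,L)$. Equating with the primed expression, the $1$'s and bare $\lambda$'s carry no information, the diagonal gives $x_ky_k=x_k^\prime y_k^\prime$, the first subdiagonal gives $y_{k-1}+x_k=y_{k-1}^\prime+x_k^\prime$ for $k=2,\dots,L$, and the $(1,L)$ entry — after cancelling the central, invertible factor $\lambda$ — supplies the remaining $k=L$ instance via the periodicity $x_{L+1}=x_1$. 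Hence the refactorization problem is exactly the system \eqref{eq:xjk}, and by the Proposition preceding Corollary~\ref{cor:inv} its solution normalized by $\alpha^\prime=\beta$, $\beta^\prime=\alpha$ is the first companion map; in the commutative case this normalization is the familiar consequence of $\det A(\bx)=(-1)^L(\alpha-\lambda)$ (up to the interchange $\alpha\leftrightarrow\beta$), and the same refactorization picture makes the braid relation for the maps $r_i$ transparent, being the two orders of refactoring $A(\bx)A(\by)A(\bz)$. The main obstacle is purely organisational: keeping the cyclic index conventions in $A(\bx)$, in the $\mathcal{P}_k$'s and in \eqref{eq:xjk} aligned, and being careful that $\lambda$, although central, is only ever moved past factors it actually multiplies.
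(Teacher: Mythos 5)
Your proposal is correct and follows exactly the route the paper intends (and leaves implicit, deferring to \cite{Dol-GD}): realize the shift as right multiplication of $\bPhi=(\bphi_1,\dots,\bphi_L)$ by $A(\bx)$ under the quasi-periodicity $\bphi_{k+L}=\bphi_k\lambda$, and check entrywise that $A(\bx)A(\by)=A(\bx^\prime)A(\by^\prime)$ reproduces precisely the system \eqref{eq:xjk}, with the $(1,L)$ entry supplying the cyclic $k=L$ relation thanks to the central $\lambda$. Your entrywise bookkeeping and the identification $\bx=\boldsymbol{u}_i$, $\by=\boldsymbol{u}_{j(i)}$, $\bx^\prime=\boldsymbol{u}_j$, $\by^\prime=\boldsymbol{u}_{i(j)}$ are accurate, so nothing is missing.
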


\subsection{Realization of Coxeter relation under the centrality assumption}
Consider again a sequence $(\boldsymbol{w}_1, \boldsymbol{w}_2, \dots , \boldsymbol{w}_N) $ o weights along a shortest path in $N$-cube from the initial to the terminal vertex, each weight is a sequence $\boldsymbol{w}_j= (w_{j,1}, \dots w_{j,L} )$ of non-commuting variables satisfying the centrality assumption that the product $\alpha_j = w_{j,1} w_{j,2}  \dots w_{j,L}$ commutes with all $w_{j,k}$. As we already have mentioned the symmetric group $\mathcal{S}_N$ acts in natural way on the paths and thus on the weights. To make use of results of Section~\ref{sec:comp-KP} define polynomials
\begin{equation} %\begin{split}
\mathcal{P}_{j,k} = \sum_{a=0}^{L-1} \left( \prod_{i=0}^{a-1} w_{j+1,k+i} \prod_{i=a+1}^{L-1} w_{j,k+i} \right) ,% = \\ &
%w_{j,k+1} w_{j,k+2} \dots w_{j,k+L-1} + w_{j+1,k} w_{j,k+2} \dots w_{j,k+L-1} + \dots \\ + &
%w_{j+1,k} \dots w_{j+1,k+L - 3}  w_{j,k+L-1} + w_{j+1,k} \dots w_{j+1,k+L - 3}  w_{j+1,k+L-2}
%\end{split} 
\end{equation}
where the second index should be considered modulo $L$. 
\begin{Prop}
Define the rational maps $r_j$, $j=1,\dots , N-1$, of the non-commuting variables $w_{j,k}$, $j=1,\dots , N$, $k=1,\dots ,L$, 
\begin{align} \label{eq:r-j-xij}
r_j(w_{j,k}) \, & = P_{j,k}^{-1} \, w_{j+1,k} \, P_{j,k+1}, \\ 
r_j(w_{j+1,k}) & = P_{j,k} \, w_{j,k} \, P_{j,k+1}^{-1}, \\ 
r_j(w_{i,k}) \, &  = w_{i,k} \quad \qquad  \text{for} \quad i\neq j, j+1.
\end{align}
If we assume centrality of the products $\alpha_j = \prod_{i=1}^L w_{j,i}$ then the maps $r_j$ satisfy the Coxeter relations
\begin{equation*}
r_j^2 = \mathrm{Id}, \qquad
r_j \circ r_{j+1} \circ r_j = r_{j+1} \circ r_j \circ r_{j+1},\qquad
r_i \circ r_j = r_j \circ r_i \qquad  \text{for} \quad |i - j| > 1.
\end{equation*}
\end{Prop}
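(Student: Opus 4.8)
The plan is to verify the three Coxeter relations separately, reducing each to the results already established in Section~\ref{sec:comp-KP}. First, involutivity $r_j^2 = \mathrm{Id}$ is essentially immediate: the map $r_j$ acts nontrivially only on the pair $(\bw_j, \bw_{j+1})$, and on that pair it is exactly the first companion map of Section~\ref{sec:comp-KP} (with $\bx = \bw_j$, $\by = \bw_{j+1}$), which is involutory by Corollary~\ref{cor:inv}. One only needs to check that the centrality hypothesis is preserved, i.e. that after applying $r_j$ the new products $\prod_i r_j(w_{j,i})$ and $\prod_i r_j(w_{j+1,i})$ are still central; but by the Proposition preceding Corollary~\ref{cor:inv} these equal $\alpha_{j+1}$ and $\alpha_j$ respectively (the normalization $\alpha' = \beta$, $\beta' = \alpha$), hence remain central. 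The commutativity relation $r_i \circ r_j = r_j \circ r_i$ for $|i-j| > 1$ is equally easy: the two maps act on disjoint pairs of weight-sequences, so they trivially commute, once one notes again that each preserves the centrality of all the $\alpha_m$'s it does not touch.

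The substantial relation is the braid relation $r_j \circ r_{j+1} \circ r_j = r_{j+1} \circ r_j \circ r_{j+1}$. My approach here is the refactorization/Lax argument indicated by Corollary~\ref{cor:braid}. Associate to each weight-sequence $\bw_m$ the matrix $A(\bw_m)$ with central spectral parameter $\lambda$ as in Corollary~\ref{cor:braid}. The key point established in Section~\ref{sec:comp-KP} is that the first companion map is precisely the refactorization $A(\bx)A(\by) = A(\bx')A(\by')$; concretely, $r_j$ replaces the adjacent pair $\dots A(\bw_j)A(\bw_{j+1}) \dots$ in an ordered product by $\dots A(r_j(w_{j,\cdot}))A(r_j(w_{j+1,\cdot})) \dots$ while fixing the other factors. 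Therefore both sides of the braid relation, applied to the triple $(\bw_j, \bw_{j+1}, \bw_{j+2})$, correspond to two ways of passing from the ordered product $A(\bw_j)A(\bw_{j+1})A(\bw_{j+2})$ to a product $A(\bw_j'')A(\bw_{j+1}'')A(\bw_{j+2}'')$ by successive adjacent refactorizations — exactly the standard derivation of the Yang--Baxter/braid relation from a refactorization map. The content to be checked is that the refactorization map is \emph{well defined and single-valued} on triples, i.e. that the intermediate factorizations are uniquely determined; this is where the uniqueness statement in the Proposition preceding Corollary~\ref{cor:inv} is used, together with the fact that centrality of all three products $\alpha_j, \alpha_{j+1}, \alpha_{j+2}$ is propagated through every intermediate step (each companion map merely permutes these central products among the three slots).

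In more detail, I would argue as follows. Both composite maps send $(\bw_j,\bw_{j+1},\bw_{j+2})$ to a triple whose associated ordered matrix product $A(\cdot)A(\cdot)A(\cdot)$ equals $A(\bw_j)A(\bw_{j+1})A(\bw_{j+2})$ (since each elementary $r$ preserves the product by the refactorization property) and whose three central products $\alpha$ are the reversal $(\alpha_{j+2},\alpha_{j+1},\alpha_j)$ of the original ones (by tracking the $\alpha'=\beta$, $\beta'=\alpha$ rule through the three transpositions, which compose to the full reversal). So it suffices to show that a triple $(\bv_1,\bv_2,\bv_3)$ of weight-sequences with prescribed central products $\alpha_1,\alpha_2,\alpha_3$ is uniquely recovered from the ordered product $A(\bv_1)A(\bv_2)A(\bv_3)$. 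This in turn follows by reading off, from the entries of the product matrix, the quantities $x_k y_k$ and $y_k + x_{k+1}$ for each adjacent pair — i.e. the data \eqref{eq:xjk} — and invoking the uniqueness part of the Proposition preceding Corollary~\ref{cor:inv}, which guarantees that \eqref{eq:xjk} together with the centrality normalization has a unique solution. The main obstacle is precisely this bookkeeping: one must make sure the centrality assumption genuinely survives through \emph{both} intermediate configurations on each side of the braid relation (not merely at the endpoints), so that Lemma~\ref{lem:XY-central} and the uniqueness Proposition are applicable at every step; granting that, the braid relation reduces to the uniqueness of the matrix refactorization, which is a formal consequence of what has already been proved.
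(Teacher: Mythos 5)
Your handling of involutivity and commutativity is fine and coincides with the paper's: on the pair $(\bw_j,\bw_{j+1})$ the map $r_j$ is the first companion map, involutory by Corollary~\ref{cor:inv}, the normalization $\alpha'=\beta$, $\beta'=\alpha$ keeps the centrality hypothesis in force, and maps $r_i$, $r_j$ with $|i-j|>1$ act on disjoint weight-sequences. For the braid relation you choose the refactorization route of Corollary~\ref{cor:braid} (the paper's ``equivalently'' option, referred to Veselov), and your bookkeeping of how the central products are permuted and remain central through the intermediate configurations is correct.

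The gap is in the step that is supposed to carry the whole braid relation: the claim that a triple $(\bv_1,\bv_2,\bv_3)$ with prescribed central products is uniquely recovered from $A(\bv_1)A(\bv_2)A(\bv_3)$ ``by reading off, from the entries of the product matrix, the quantities $x_k y_k$ and $y_k+x_{k+1}$ for each adjacent pair''. This reading-off is not possible: for generic indices (up to signs and the $\lambda$-weighted wrap-around terms) the diagonal of the triple product carries $v_{1,k}v_{2,k}v_{3,k}$, the first subdiagonal carries $v_{1,k}v_{2,k}+v_{1,k}v_{3,k-1}+v_{2,k-1}v_{3,k-1}$, and the second subdiagonal carries $v_{1,k}+v_{2,k-1}+v_{3,k-2}$, i.e.\ all entries mix the three factors, and the adjacent-pair data \eqref{eq:xjk} for $(\bv_1,\bv_2)$ and $(\bv_2,\bv_3)$ cannot be extracted, so the uniqueness Proposition for the companion map cannot be invoked in the way you propose. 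More generally, uniqueness of the two-factor refactorization subject to \eqref{eq:xy-norm} does not formally imply uniqueness of the three-factor refactorization; that is precisely the point requiring an independent argument. The paper supplies it by its primary route: the braid relation follows from the path interpretation on the cube, i.e.\ from the already established three-dimensional consistency of the KP map combined with the uniqueness of the (two-factor) companion map --- the 3D consistency plays the role of the missing triple-factorization uniqueness. To repair your version you would either have to prove the triple uniqueness directly (for instance by a degeneration/kernel argument in the spectral parameter specific to these Lax matrices, as in the Veselov and Suris--Veselov treatments the paper cites) or switch to the consistency/path argument.
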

\begin{proof}
The commutativity part is clear from definition of the maps. Involutivity part comes from Corollary~\ref{cor:inv}, and the 
braid relations follow from the path-interpretation  and uniqueness of the companion map subject to normalization conditions~\eqref{eq:xy-norm}. Equivalently, we can use the unique refactorization interpretation given in Corollary~\ref{cor:braid}, and follow the argumentation presented in \cite{Veselov}.
\end{proof}
\begin{Cor}
The action of $r_j$ on the central elements $\alpha_i$ is 
\begin{equation}
r_j(\alpha_j) = \alpha_{j+1}, \qquad r_j(\alpha_{j+1}) = \alpha_j, \qquad r_j(\alpha_i) = \alpha_i
\quad \text{for} \quad i\neq j, j+1.
\end{equation}
\end{Cor}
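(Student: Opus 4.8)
The plan is to treat the three cases of the index $i$ separately, each one collapsing to a short telescoping product. The case $i \neq j, j+1$ needs nothing: by \eqref{eq:r-j-xij} the map $r_j$ fixes every generator $w_{i,k}$, $k = 1, \dots, L$, and hence fixes their product $\alpha_i = w_{i,1} w_{i,2} \dots w_{i,L}$.

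For $i = j$ I would apply $r_j$ to $\alpha_j = w_{j,1} w_{j,2} \dots w_{j,L}$ factor by factor and substitute \eqref{eq:r-j-xij}, obtaining
$r_j(\alpha_j) = (P_{j,1}^{-1} w_{j+1,1} P_{j,2})(P_{j,2}^{-1} w_{j+1,2} P_{j,3}) \cdots (P_{j,L}^{-1} w_{j+1,L} P_{j,1})$,
where the final factor closes up because the second index of $\mathcal{P}_{j,k}$ is read modulo $L$. The interior pairs $P_{j,k+1} P_{j,k+1}^{-1}$ cancel, leaving $P_{j,1}^{-1}(w_{j+1,1} w_{j+1,2} \dots w_{j+1,L}) P_{j,1} = P_{j,1}^{-1} \alpha_{j+1} P_{j,1}$, which equals $\alpha_{j+1}$ because $\alpha_{j+1}$ is central by hypothesis. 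The case $i = j+1$ is the mirror image: using $r_j(w_{j+1,k}) = P_{j,k} w_{j,k} P_{j,k+1}^{-1}$ one gets the telescoping product $P_{j,1} \alpha_j P_{j,1}^{-1}$, which collapses to $\alpha_j$ by centrality of $\alpha_j$.

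Alternatively, one may bypass the computation: identifying $\bx = \bw_j$ and $\by = \bw_{j+1}$, the restriction of $r_j$ to these two weights is exactly the first companion map of Section~\ref{sec:comp-KP}, with $\alpha_j$ in the role of $\alpha$ and $\alpha_{j+1}$ in the role of $\beta$; the normalization $\alpha^\prime = \beta$, $\beta^\prime = \alpha$ proved there gives the assertion at once, consistently with $r_j^2 = \mathrm{Id}$ (Corollary~\ref{cor:inv}). There is no genuine obstacle in this corollary; the only point to keep in view is that the telescoping collapse of the polynomials $P_{j,k}$ uses precisely the standing centrality of $\alpha_j$ and $\alpha_{j+1}$.
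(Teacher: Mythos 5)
Your proof is correct and matches the paper's (implicit) argument: the corollary is stated without proof precisely because it follows at once from the normalization conditions $\alpha^\prime=\beta$, $\beta^\prime=\alpha$ of the first companion map, which is your second route, while your telescoping computation $r_j(\alpha_j)=P_{j,1}^{-1}\alpha_{j+1}P_{j,1}=\alpha_{j+1}$ is exactly the verification behind that normalization. Nothing is missing.
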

\begin{Cor}
We can consider the division ring $\DD$ as a division algebra over a fixed subfield $\Bbbk$ of its center. Therefore we can state the centrality condition as $\alpha_j\in\Bbbk$.
\end{Cor}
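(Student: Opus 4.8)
The plan is to treat this statement as a reformulation of the centrality hypothesis and to pin down a single base field $\Bbbk$ that serves the whole construction. First I would recall the elementary fact that the centre $Z(\DD)$ of a division ring $\DD$ is a (commutative) field --- it is closed under inverses because $z^{-1}w=z^{-1}wzz^{-1}=z^{-1}zwz^{-1}=wz^{-1}$ whenever $z$ is central and nonzero --- and that $\DD$ is automatically a division algebra over $Z(\DD)$, and hence over every subfield $\Bbbk\subseteq Z(\DD)$. So the substance of the first assertion lies entirely in the word \emph{fixed}: we want one $\Bbbk$, chosen once and for all, which is moreover large enough to contain the distinguished central elements $\alpha_1,\dots,\alpha_N$.

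The concrete choice I would make is $\Bbbk:=\Bbbk_0(\alpha_1,\dots,\alpha_N)\subseteq Z(\DD)$, the subfield of $Z(\DD)$ generated over the prime subfield $\Bbbk_0$ of $\DD$ by the elements $\alpha_1,\dots,\alpha_N$. The one point at which the earlier results are genuinely used is to check that this $\Bbbk$ is legitimate for the whole framework of Section~\ref{sec:nc-KP-sym}: by the preceding Corollary each $r_j$ merely permutes the set $\{\alpha_1,\dots,\alpha_N\}$, namely $r_j(\alpha_j)=\alpha_{j+1}$, $r_j(\alpha_{j+1})=\alpha_j$ and $r_j(\alpha_i)=\alpha_i$ for $i\neq j,j+1$, so $\Bbbk$ is stable under every $r_j$ and hence under the entire $\mathcal{S}_N$-action; consequently the refactorization and Coxeter identities may all be read as identities of $\Bbbk$-algebra maps over this one fixed $\Bbbk$. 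Granting this, the equivalence of the two forms of the hypothesis is immediate: $\alpha_j$ being central means $\alpha_j\in Z(\DD)$, the inclusion $\Bbbk\subseteq Z(\DD)$ makes $\alpha_j\in\Bbbk$ sufficient for centrality, and centrality of all the $\alpha_i$ is exactly what allows $\Bbbk$ to be formed with $\alpha_j\in\Bbbk$, so the membership condition is also necessary. Thus, once $\DD$ is regarded as a division algebra over this $\Bbbk$, the centrality assumption is precisely the condition $\alpha_j\in\Bbbk$.

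I do not expect a genuine obstacle here; everything reduces to the bookkeeping above. The only point that is not a triviality about centres of division rings is the stability of the chosen $\Bbbk$ under the maps $r_j$, which is why I would cite the previous Corollary explicitly rather than simply taking $\Bbbk=Z(\DD)$ --- a choice that would make ``fixed subfield of the centre'' content-free. Alternatively, if one prefers to view the $\alpha_j$ as externally prescribed scalar parameters, one takes $\Bbbk$ to be any field containing them from the outset and there is nothing left to prove.
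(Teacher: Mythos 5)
Your argument is correct and matches the paper's intent: the Corollary is stated without proof, its justification being exactly what you supply, namely that by the preceding Corollary the maps $r_j$ only permute the set $\{\alpha_1,\dots,\alpha_N\}$, so a fixed subfield $\Bbbk$ of the center containing these elements is preserved by the whole action and the centrality hypothesis can be phrased as $\alpha_j\in\Bbbk$. Your explicit choice $\Bbbk=\Bbbk_0(\alpha_1,\dots,\alpha_N)$ simply makes concrete what the paper leaves implicit, so this is essentially the same approach.
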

\begin{Cor}
Define second companion map $R = \tau \circ r$, i.e. 
$ R: (\boldsymbol{x}, \boldsymbol{y}) \mapsto (\tilde{\boldsymbol{x}}, \tilde{\boldsymbol{y}})= (\boldsymbol{y}^\prime, \boldsymbol{x}^\prime)$ then by \eqref{eq:xy-j-sol}
we obtain formulas \eqref{eq:R}. Due to well known relation~\cite{Veselov} between realizations of the Coxeter relations and reversible Yang--Baxter maps we proved in this way Theorem~\ref{th:xy-YB}.
\end{Cor}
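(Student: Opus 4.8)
The plan is to reduce Theorem~\ref{th:xy-YB} to the results already assembled in Section~\ref{sec:nc-KP-sym} rather than to verify the functional Yang--Baxter equation~\eqref{eq:f-YB} by brute force. First I would observe that the polynomials $\mathcal{P}_k$ in the theorem statement are exactly the specialization $N=2$, $L$ arbitrary, of the $\mathcal{P}_{j,k}$ of the Proposition preceding the final Corollary: with $w_{1,k}=x_k$ and $w_{2,k}=y_k$ we have $\mathcal{P}_{1,k}=\mathcal{P}_k$. The centrality hypothesis of Theorem~\ref{th:xy-YB}, namely that $\alpha=x_1\cdots x_L$ and $\beta=y_1\cdots y_L$ are central, is precisely the centrality of $\alpha_1,\alpha_2$ required there. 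Hence the first companion map $r:(\bx,\by)\mapsto(\bx^\prime,\by^\prime)$ of Section~\ref{sec:comp-KP}, given in closed form by~\eqref{eq:xy-j-sol}, is well-defined on our data, is involutory (Corollary~\ref{cor:inv}), and — when assembled into the maps $r_j$ on an $N$-cube of weights — satisfies the full set of Coxeter relations.

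Next I would pass from the first companion map $r$ to the map $R$ of the theorem. Setting $R=\tau\circ r$ means $\tilde{\bx}=\by^\prime$ and $\tilde{\by}=\bx^\prime$, so by~\eqref{eq:xy-j-sol},
\begin{equation*}
\tilde{x}_k=y_k^\prime=\mathcal{P}_k\,x_k\,\mathcal{P}_{k+1}^{-1},\qquad
\tilde{y}_k=x_k^\prime=\mathcal{P}_k^{-1}\,y_k\,\mathcal{P}_{k+1},
\end{equation*}
which is exactly~\eqref{eq:R}. It remains to invoke the standard dictionary between realizations of the Coxeter relations of $\mathcal{S}_N$ by maps $r_j$ acting on consecutive factors of a product set and solutions of~\eqref{eq:f-YB}: if $r$ is an involution and the induced $r_j=\tau_{j,j+1}\circ R_{j,j+1}\circ\cdots$ satisfy the braid relation, then $R=\tau\circ r$ satisfies~\eqref{eq:f-YB}, and involutivity of $r$ translates into $R_{21}\circ R=\mathrm{Id}$, i.e. reversibility. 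This is the content of the cited result of Veselov~\cite{Veselov}, so the verification of~\eqref{eq:f-YB} and of reversibility is automatic once the Coxeter relations are in hand.

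Thus the logical skeleton is: (i) identify the $\mathcal{P}_k$ with the $N=2$ case of the $\mathcal{P}_{j,k}$ and check the hypotheses match; (ii) quote the Proposition that the $r_j$ satisfy the Coxeter relations under centrality of the $\alpha_j$; (iii) quote Veselov's correspondence to conclude that $R=\tau\circ r$ is a reversible Yang--Baxter map; (iv) read off that the explicit formulas for $R$ coincide with~\eqref{eq:R}. The one point requiring genuine care — and the main obstacle in the development leading up to the theorem — is not in this final assembly but upstream: establishing that the closed-form solution~\eqref{eq:xy-j-sol} genuinely solves~\eqref{eq:xjk} together with the swapped normalization $\alpha^\prime=\beta$, $\beta^\prime=\alpha$, which is where Lemma~\ref{lem:P-P} and especially Lemma~\ref{lem:XY-central} (the centrality of $\mathcal{P}_k x_k - y_k\mathcal{P}_{k+1}=\alpha-\beta$, hence the commutation of $\mathcal{P}_k x_k$ with $y_k\mathcal{P}_{k+1}$) do the real work. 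Since those lemmas are already proved in the excerpt, the proof of Theorem~\ref{th:xy-YB} itself is short: it is the Corollary at the end of the section, and I would present it essentially as such.
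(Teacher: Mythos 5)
Your proposal is correct and follows essentially the same route as the paper: the corollary is exactly the assembly step you describe, reading off formulas \eqref{eq:R} from \eqref{eq:xy-j-sol} via $R=\tau\circ r$, with the Coxeter relations of the $r_j$ (established under the centrality assumption) and Veselov's correspondence doing the rest. Your observation that the genuine work lies upstream in Lemma~\ref{lem:P-P}, Lemma~\ref{lem:XY-central} and the uniqueness of the normalized companion map matches the paper's structure.
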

\begin{figure}
\begin{center}
\includegraphics[width=14cm]{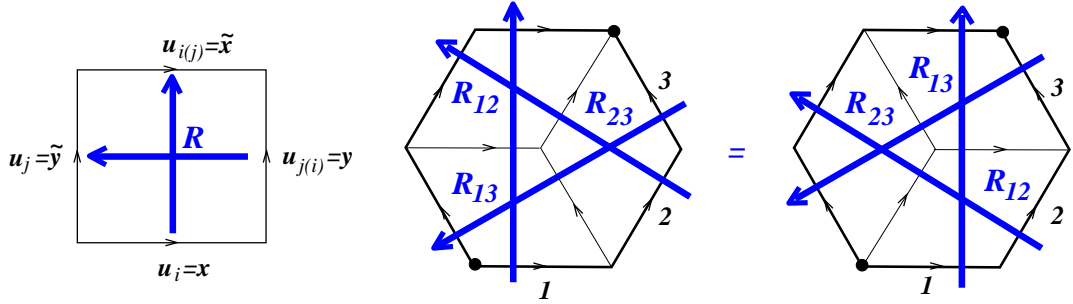}
\end{center}
\caption{Second companion map as a Yang--Baxter map} 
\label{fig:3D-GD-YB-2}
\end{figure}

\section{Non-commutative Gel'fand--Dikii systems with the centrality condition} \label{sec:GD-centr}
In \cite{Dol-GD,Dol-Des-red} we studied periodic reductions of the discrete KP hierarchy under two extreme assumptions about non-commutativity/commutativity of dependent variables. Results of Section~\ref{sec:comp-KP} suggest to consider an analogous centrality condition on the level of equations~\eqref{eq:KP-u}. By simple calculation we obtain the following result.
\begin{Prop}
In the $L$-periodic reduction $u_{i,k+L} = u_{i,k}$ of the non-commutative KP system~\eqref{eq:KP-u} assume centrality of the products $\mathcal{U}_i = u_{i,1} u_{i,2} \dots u_{i,L}$. Then the products $u_{i,k} u_{i,k+1} \dots u_{i,k+L-1}$ do not depend on index $k$, and $\mathcal{U}_i$ is a function of $n_i$ only
\begin{equation}
\mathcal{U}_{i(j)} = \mathcal{U}_i, \qquad j\neq i.
\end{equation}
\end{Prop}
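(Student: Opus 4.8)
The plan is to prove the two assertions in turn, both by short computations in the division ring $\DD$ that parallel the proof of Lemma~\ref{lem:XY-central}; throughout I work at a fixed point $n$ and use only the relations \eqref{eq:KP-u} together with the $L$-periodicity.

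For the first assertion, set $Q_{i,k} := u_{i,k} u_{i,k+1} \dots u_{i,k+L-1}$. Periodicity gives $Q_{i,k+1} = u_{i,k+1} \dots u_{i,k+L-1} u_{i,k} = u_{i,k}^{-1} Q_{i,k}\, u_{i,k}$, so all the $Q_{i,k}$ are conjugate to $Q_{i,1} = \mathcal{U}_i$; since $\mathcal{U}_i$ is assumed central, an immediate induction yields $Q_{i,k} = \mathcal{U}_i$ for every $k$, which is the exact analogue of the first part of Lemma~\ref{lem:XY-central}.

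For the second assertion I would start from the solved form \eqref{eq:KP-u-solved} of the first equation in \eqref{eq:KP-u}, which with $v_k := u_{i,k} - u_{j,k}$ reads $u_{i,k(j)} = v_k^{-1} u_{i,k} v_{k+1}$. Multiplying over $k = 1, \dots, L$, the inner factors telescope,
\begin{equation*}
\mathcal{U}_{i(j)} = \prod_{k=1}^{L} u_{i,k(j)} = \prod_{k=1}^{L} v_k^{-1} u_{i,k} v_{k+1} = v_1^{-1}\bigl( u_{i,1} u_{i,2} \dots u_{i,L}\bigr) v_{L+1} = v_1^{-1}\, \mathcal{U}_i\, v_{L+1},
\end{equation*}
and periodicity gives $v_{L+1} = u_{i,L+1} - u_{j,L+1} = u_{i,1} - u_{j,1} = v_1$; centrality of $\mathcal{U}_i$ then collapses the conjugation to $\mathcal{U}_{i(j)} = \mathcal{U}_i$. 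The calculation is elementary, so there is no real obstacle: the only points deserving a moment's care are the legitimacy of the telescoping (the factors $v_k$ are invertible, as is already implicit in \eqref{eq:KP-u-solved}) and the closing up $v_{L+1}=v_1$ by periodicity — precisely the step where the centrality hypothesis earns its keep. As a by-product one sees that the centrality condition on $\mathcal{U}_i$ is consistent with the periodic KP evolution.
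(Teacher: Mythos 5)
Your proof is correct and is precisely the ``simple calculation'' the paper alludes to (it gives no explicit proof): the conjugation argument for $k$-independence mirrors the first part of Lemma~\ref{lem:XY-central}, and the telescoping product of the solved form \eqref{eq:KP-u-solved} together with periodicity and centrality gives $\mathcal{U}_{i(j)}=v_1^{-1}\mathcal{U}_i v_1=\mathcal{U}_i$ as intended.
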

Using the above result one can obtain non-autonomous non-commutative discrete equations of the modified Gel'fand--Dikii type. It is known~\cite{Dol-GD} that the first part of equations\eqref{eq:KP-u} implies existence of potentials $\rho_k$ such that $u_{i,k} = \rho_k^{-1} \rho_{k(i)}$, while the second part gives the corresponding vertex form of the non-commutative discrete KP hierarchy
\begin{equation} \label{eq:KP-r}
(\rho_{k(j)}^{-1} - \rho_{k(i)}^{-1}) \rho_{k(ij)} = \rho_{k+1}^{-1} (\rho_{k+1(i)} - \rho_{k+1(j)} ), \qquad k\in\ZZ / (L\ZZ), \qquad i\neq j.
\end{equation}
Then we replace one of the functions $\rho_i$ by others and the central non-autonomous factors. To make connection with known results it is convenient to define central functions $\mathcal{F}_i = \left( \mathcal{U}_i \right)^{1/L}$ of the corresponding single variables $n_i$, and then consider the central function $\mathcal{G}$ defined by compatible system $\mathcal{G}_{(i)} = \mathcal{F}_i \mathcal{G}$. We remark that such $\mathcal{G}$ is a product of functions of single variables. 

In the simplest case $L=2$ define, like in \cite{Dol-GD}, a function $x$ by $\rho_1 = x \mathcal{G}$. Then 
\begin{equation*}
u_{i,1} = \rho_1^{-1} \rho_{1(i)} = x^{-1} x_{(i)} \mathcal{F}_i, 
 \qquad \text{and} \qquad
u_{i,2} = \rho_2^{-1} \rho_{2(i)} = x_{(i)}^{-1} x \mathcal{F}_i,
\end{equation*}
which inserted in equations~\eqref{eq:KP-r} produces the non-commutative Hirota (or discrete sine-Gordon or lattice modified Korteweg--de~Vries) equation studied in~\cite{Hirota-KdV,Hirota-sG,BobSur-nc}
\begin{equation} \label{eq:Hirota-SG}
\left( x_{(j)}^{-1} \mathcal{F}_i - x_{(i)}^{-1} \mathcal{F}_j \right) x_{(ij)} = 
\left( x_{(i)}^{-1} \mathcal{F}_i - x_{(j)}^{-1} \mathcal{F}_j \right) x .
\end{equation}
\begin{Rem}
To recover the equation in the form studied in~\cite{BobSur-nc} notice that after extracting $x_{(j)}^{-1}$ the expressions in brackets commute, and use inverses of the non-autonomous factors  $\mathcal{F}_i$.
\end{Rem}

For $L=3$ define unknown functions $x$ and $y$ by equations
\begin{equation*}
\rho_1 = x \mathcal{G} , \qquad \rho_3 = y^{-1}\mathcal{G}, \qquad \mathcal{G}_{(i)} = \mathcal{F}_i \mathcal{G}
\end{equation*}
which allows to find
\begin{equation*}
\rho_2^{-1} \rho_{2(i)} = u_{i,2} = x_{(i)}^{-1} x y_{(i)} y^{-1} \mathcal{F}_i .
\end{equation*}
Making such substitution in \eqref{eq:KP-r} for $k=1$ and $k=3$ we obtain the following non-commutative integrable two-component system (equation for $k=2$ is then its consequence)
\begin{align*}
\bigl( x_{(j)}^{-1} \mathcal{F}_{i} - x_{(i)}^{-1} \mathcal{F}_{j} \bigr) x_{(ij)} & =
\bigl(  x_{(i)}^{-1} x y_{(i)} \mathcal{F}_{i} -  x_{(j)}^{-1} x y_{(j)} \mathcal{F}_{j} \bigr) y^{-1}, \\
\bigl( y_{(j)} \mathcal{F}_{i} - y_{(i)} \mathcal{F}_{j} \bigr) y^{-1}_{(ij)} & =
x^{-1} \bigl(  x_{(i)} \mathcal{F}_{i} -  x_{(j)} \mathcal{F}_{j} \bigr) .
\end{align*}
Next, by elimination of the field $x$ we obtain integrable non-commutative and non-autonomous (with central non-autonomous coefficients $\mathcal{F}_{i}$) version of the lattice modified Boussinesq~\cite{FWN-lB} equation
\begin{equation*} \begin{split}
\left[ \left( y_{(j)} \mathcal{F}_{i} - y_{(i)} \mathcal{F}_{j} \right)^{-1} y^{-1}_{(ij)} \right]_{(ij)} - &
y \bigl( y_{(i)}^{-1} \mathcal{F}_{i} - y_{(j)}^{-1} \mathcal{F}_{j} \bigr) = \\ 
\left[  y_{(ij)} \left( y_{(j)} \mathcal{F}_{i} -  y_{(i)} \mathcal{F}_{j} \right)^{-1} 
\left( y_{(i)} \mathcal{F}_{i}^2 -   y_{(j)} \mathcal{F}_{j}^2 \right) y^{-1} \right]_{(i)}  - &
\left[  y_{(ij)} \left( y_{(j)} \mathcal{F}_{i} -  y_{(i)} \mathcal{F}_{j} \right)^{-1} 
\left( y_{(i)} \mathcal{F}_{i}^2 -   y_{(j)} \mathcal{F}_{j}^2 \right) y^{-1} \right]_{(j)} .
\end{split} 
\end{equation*}

\section{Concluding remarks}
We presented a non-commutative rational Yang--Baxter map obtained from the non-commutative discrete KP hierarchy subject to periodicity and centrality constraints. The corresponding integrable systems, which generalize the non-commutative non-autonomous Hirota's sine-Gordon equation~\cite{BobSur-nc} have been also considered. In particular we have obtained an integrable non-commutative and non-autonomous lattice modified Boussinesq equation. We remark, see~\cite{Dol-GD,Dol-Des-red}, that three dimensional consistency of the equations considered here is a consequence of four dimensional compatibility of the non-commutative Hirota's discrete KP system \cite{Dol-Des}, where the counterpart of the functional Yang--Baxter equation is the functional pentagon equation~\cite{DoliwaSergeev-pentagon}. Since the solutions of the pentagon equation presented in~\cite{DoliwaSergeev-pentagon} allow for quantization (understood as a reduction from the non-commutative case by adding certain commutation relations preserved by the integrable evolution), we expect that also the non-commutative rational Yang--Baxter map obtained above can be quantized in such a way also. It would be instructive to understand various applications of the Hirota discrete KP systems and its reductions reviewed in~\cite{KNS-rev} from that perspective. 

\section*{Acknowledgments}
The research was initiated during author's work at Institute of Mathematics of the Polish Academy of Sciences.
The paper was supported in part by Polish Ministry of Science and Higher Education grant No.~N~N202~174739.

\bibliographystyle{amsplain}

\begin{thebibliography}{10}


\bibitem{ABS}
V. E. Adler, A. I. Bobenko, Yu. B. Suris, \emph{Classification of integrable
equations on quadgraphs. The consistency approach}, Commun. Math. Phys.
\textbf{233} (2003) 513--543.


\bibitem{ABS-YB}
V. E. Adler, A. I. Bobenko, Yu. B. Suris, \emph{Geometry of Yang-Baxter maps: pencils of conics and quadrirational mappings}, Comm. Anal. Geom. \textbf{12} (2004) 967--1007.

\bibitem{Baxter}
R. J. Baxter, \emph{Exactly solved models in statistical mechanics}, Academic Press, London, 1982.

\bibitem{BobSur-nc}
A.I. Bobenko, Yu. B. Suris, \emph{Integrable non-commutative equations on quad-graphs. The consistency approach}, Lett. Math. Phys. \textbf{61} (2002) 241-254.

\bibitem{Cohn}
P. M. Cohn, \emph{Skew fields. Theory of general division rings}, Cambridge University Press, 1995.


\bibitem{DJM-II}
E. Date, M. Jimbo, T. Miwa, \emph{Method for generating discrete soliton
equations. II}, J. Phys. Soc. Japan \textbf{51} (1982) 4125--31.

\bibitem{DF-K}
P. Di Francesco, R. Kedem, \emph{Discrete non-commutative integrability: Proof of a conjecture by 
M.~Kontsevich}, Intern. Math. Res. Notes \textbf{2010} (2010) 4042--4063.

\bibitem{Dol-Des} 
A. Doliwa, \emph{Desargues maps and the Hirota--Miwa equation}, Proc. R. Soc. A
\textbf{466} (2010) 1177--1200.

\bibitem{Dol-GD}
A. Doliwa, \emph{Non-commutative lattice modified Gel'fand--Dikii systems}, J. Phys. A: Math. Theor. \textbf{46} (2013) 205202, 14 pp.

\bibitem{Dol-Des-red}
A. Doliwa, \emph{Desargues maps and their reductions}, \texttt{arXiv:1307.8294}.


\bibitem{DoliwaSergeev-pentagon}
A. Doliwa, S. M. Sergeev, The pentagon relation and incidence geometry, \texttt{arXiv:1108.0944}.


\bibitem{Drinfeld}
V. G. Drinfeld, \emph{On some unsolved problems in quantum group theory}, [in:] Quantum groups (Leningrad, 1990), Lect. Notes Math. \textbf{1510}, P.~P.~Kulish (ed.), pp. 1--8, Springer, Berlin 1992.

\bibitem{Etingof}
P. Etingof, \emph{Geometric crystals and set-theoretical solutions to the quantum Yang--Baxter equation}, Comm. Algebra \textbf{31} (2003) 1961--1973.

\bibitem{Hirota-KdV}
R. Hirota,
\emph{Nonlinear partial difference equations. I. A difference analog of the Korteweg--de~Vries equation},
J. Phys. Soc. Jpn. \textbf{43} (1977) 1423--1433.

\bibitem{Hirota-sG}
R. Hirota,
\emph{Nonlinear partial difference equations. III. Discrete sine-Gordon equation},
J. Phys. Soc. Jpn. \textbf{43} (1977) 2079--2086.
\bibitem{Hirota}
R. Hirota,
\emph{Discrete analogue of a generalized Toda equation},
J. Phys. Soc. Jpn. \textbf{50} (1981) 3785--3791.

\bibitem{Humphreys}
J. Humphreys, \emph{Reflection groups and Coxeter groups}, Cambridge University
Press, Cambridge, 1992.

\bibitem{KNY-A}
K. Kajiwara, M. Noumi, Y. Yamada, \emph{Discrete dynamical systems with $W(A_{m-1}^{(1)} \times A_{n-1}^{(1)})$ symmetry}, Lett. Math. Phys. \textbf{60} (2002) 211--219.

\bibitem{KNY-qKP}
K. Kajiwara, M. Noumi, Y. Yamada, \emph{$q$-Painlev\'{e} systems arising from
q-KP hierarchy}, Lett. Math. Phys. \textbf{62} (2002) 259--268.

\bibitem{QISM}
V. E. Korepin, N. M. Bogoliubov, A. G. Izergin, \emph{Quantum inverse scattering method and correlation functions}, University Press, Cambridge, 1993. 

\bibitem{KNS-rev}
A. Kuniba, T. Nakanishi, J. Suzuki,
{\it $T$-systems and $Y$-systems in integrable systems},
J. Phys. A: Math. Theor. {\bf 44} (2011) 103001, 146 pp.

\bibitem{Kupershmidt}
B. Kupershmidt, \emph{KP or mKP: Noncommutative Mathematics of Lagrangian, Hamiltonian, and Integrable Systems}, AMS, Providence, 2000.

\bibitem{FWN-lB}
F. W. Nijhoff, \emph{Discrete Painlev\'{e} equations and symmetry reduction on the lattice}, [in:] Discrete Integrable Geometry and Physics, eds. A. I. Bobenko and R. Seiler, pp. 209--234, Clarendon Press, Oxford, 1999.

 
\bibitem{FWN-cons}
F. W. Nijhoff, \emph{Lax pair for the Adler (lattice Krichever--Novikov) 
system}, Phys. Lett. A \textbf{297} (2002) 49--58.

\bibitem{FWN-Capel}
F. W. Nijhoff, H. W. Capel, \emph{The direct linearization approach to
hierarchies of integrable PDEs in $2+1$ dimensions: I. Lattice equations and the
differential-difference hierarchies}, Inverse Problems \textbf{6} (1990)
567--590.

\bibitem{Nimmo-NCKP}
J. J. C. Nimmo, \emph{On a non-Abelian Hirota-Miwa equation}, 
J. Phys. A: Math. Gen. \textbf{39} (2006) 5053--5065. 
 
\bibitem{PTV}
V. G. Papageorgiou, A. G. Tongas, A. P. Veselov, \emph{Yang--Baxter maps and symmetries of integrable equations on quad-graphs}, J. Math. Phys. \textbf{47} (2006) 083502, 16 pp.


\bibitem{SurisVeselov}
Yu. B. Suris, A. P. Veselov, \emph{Lax matrices for Yang--Baxter maps}, J. Nonlin. Math. Phys. \textbf{10}, Supplement 2 (2003) 223--230.

\bibitem{Veselov}
A. P. Veselov, \emph{Yang--Baxter maps and integrable dynamics}, Phys. Lett. A \textbf{314} (2003) 214--221.

\end{thebibliography}

\providecommand{\bysame}{\leavevmode\hbox to3em{\hrulefill}\thinspace}

\end{document}